\def\icalp{0}
\def\Drawings{1}
\pgfplotsset{compat=1.14}
\newtheorem{thm}{Theorem}
\newtheorem{tcor}{Corollary}
\newtheorem{lem}{Lemma}[section]
\newtheorem{cor}[lem]{Corollary}
\newtheorem{prop}[lem]{Proposition}
\newtheorem{claim}[lem]{Claim}
\newtheorem{rem}[lem]{Remark}
\newtheorem{dfn}[lem]{Definition}
\newtheorem{ntn}[lem]{Notation}
\newtheorem{obs}[lem]{Observation}
\newcommand{\dft}[1]{\textbf{\textit{#1}}}
\newcommand{\e}{\varepsilon}
\newcommand{\eps}{\varepsilon}
\newcommand{\calG}{\mathcal{G}}
\newcommand{\calL}{\mathcal{L}}
\newcommand{\N}{\mathbf{N}}
\newcommand{\E}{\mathbf{E}}
\newcommand{\cAER}{\mA_{ua}}
\newcommand{\cAERR}{\mA_{ba}}
\DeclareMathOperator{\disj}{disj}
\newcommand{\confEqn}[1]{
\ifnum\icalp=0
	\begin{equation}
	#1
	\end{equation}
\else
	$#1$
\fi
}
\newcommand{\abs}[1]{\left|#1\right|}
\newcommand{\paren}[1]{\left(#1\right)}
\newcommand{\set}[1]{\left\{#1\right\}}
\newcommand{\sqb}[1]{\left[#1\right]}
\newcommand{\sucht}{\,\middle|\,}
\renewcommand{\th}{^{\textrm{th}}}
\newcommand{\thres}{\theta}
\newcommand{\setthres}{4\alpha\lceil\log n\rceil/\eps}
\newcommand{\setbeta}{\eps/2\lceil\log n\rceil}
\newcommand{\setell}{\lceil \log n \rceil}
\newcommand{\setL}{\frac{\log n}{C \cdot \loglog n}}
\newcommand{\eqdef}{\stackrel{\text{def}}{=}}
\newcommand{\thr}{\theta}
\newcommand{\sal}{\hyperref[sal]{\textup{\color{black}{\sf Sample-a-leaf}}}}
\newcommand{\se}{\hyperref[se]{\textup{\color{black}{\sf Sample-edge}}}}
\newcommand{\rw}{\hyperref[rw]{\textup{\color{black}{\sf Random-walk}}}}
\newcommand{\sal}{\text{\sf Sample-a-leaf}}
\newcommand{\se}{\text{\sf Sample-edge}}
\newcommand{\rw}{\text{\sf Random-walk}}
\DeclareMathOperator{\poly}{poly}
\renewcommand{\deg}{{\sf deg}}
\DeclareMathOperator{\nbr}{{\sf nbr}}
\DeclareMathOperator{\pair}{{\sf pair}}
\DeclareMathOperator{\answer}{{\sf ans}}
\newcommand{\mA}{\mathcal{A}}
\newcommand{\HP}{P}
\newcommand{\pset}{\mathcal{X}}
\newcommand{\mT}{\mathcal{T}}
\newcommand{\FAIL}{{\sc FAIL}}
\newcommand{\HEADS}{{\sc HEADS}}
\newcommand{\calE}{\mathcal{E}}
\DeclareMathOperator{\cost}{cost}
\newcommand{\cA}{\mathcal{A}}
\newcommand{\calQ}{\mathcal{Q}}
\newcommand{\calT}{\mathcal{T}}
\newcommand{\calP}{\mathcal{P}}
\newcommand{\davg}{{d}}
\newcommand{\ua}{\sqrt m/\davg}
\newcommand{\ba}{\alpha/\davg}
\newcommand{\baf}{\frac{\alpha}{\davg}}
\newcommand{\tT}{T}
\newcommand{\tV}{V}
\newcommand{\tE}{E}
\newcommand{\lT}{\widehat{T}}
\newcommand{\depth}{\Delta}
\newcommand{\lb}{L}
\newcommand{\aT}{{T'}}
\newcommand{\aTt}{{T'_t}}
\newcommand{\aV}{{V'}}
\newcommand{\aE}{{E'}}
\newcommand{\Eone}{{{\cal E}_1}}
\newcommand{\Etwo}{{{\cal E}_2}}
\newcommand{\Ethree}{{{\cal E}_3}}
\newcommand{\PhiOne}{\Phi^{T' \rightarrow T}_{r' \rightarrow u_*}}
\newcommand{\PhiTwo}{\Phi^{T' \rightarrow T}_{u'_* \rightarrow r}}
\newcommand{\PhiThree}{\Phi^{T\rightarrow T''}_{r \rightarrow u'_*}}
\newcommand{\loglog}{{\rm loglog}}
\newcommand{\dmax}{{d_{\max}}}
\definecolor{bubbles}{rgb}{0.91, 1.0, 1.0}
\definecolor{aqua}{rgb}{0.0, 1.0, 1.0}
\definecolor{blue-green}{rgb}{0.0, 0.87, 0.87}
\definecolor{amethyst}{rgb}{0.6, 0.4, 0.8}
\DeclareRobustCommand{\hldana}[1]{{\sethlcolor{aqua}\hl{#1}}}
\DeclareRobustCommand{\hltalya}[1]{{\sethlcolor{yellow}\hl{#1}}}
\DeclareRobustCommand{\hlwill}[1]{{\sethlcolor{pink}\hl{#1}}}
\newcommand{\dnote}[1]{{\hldana{[\textbf{D}: #1]}}}
\newcommand{\tnote}[1]{\hltalya{[\textbf{T}: #1]}}
\renewcommand{\tnote}[1]{\hltalya{[\textbf{T}: #1]}}
\newcommand{\wnote}[1]{\hlwill{[\textbf{W}: #1]}}
\title{The Arboricity Captures the Complexity of Sampling Edges}
\author{%
  Talya Eden\thanks{School of Electrical Engineering, Tel Aviv University, Tel Aviv, Israel}
  \and
  Dana Ron\thanks{School of Electrical Engineering, Tel Aviv University, Tel Aviv, Israel}
  \and
  Will Rosenbaum\thanks{Max Planck Institute for Informatics, Saarbr\"{u}cken, Germany}
}
\date{\today}
\author{Talya Eden}{Tel Aviv University, Tel Aviv, Israel}{talyaa01@gmail.com}{Supported by the Azrieli fellowship program for graduate students, by the Sephora Scholarship and by the Weinstein Graduate Studies Prize.}{}{}
\author{Dana Ron}{Tel Aviv University, Tel Aviv, Israel}{danaron@tau.ac.il}{Partially supported by the Israel Science Foundation grant No.~1146/18}{}{}
\author{Will Rosenbaum}{Max Planck Institute for Informatics, {Saarbr\"{u}cken}, Germany}{will.rosenbaum@gmail.com}{}{}
\authorrunning{T. Eden, D. Ron and W. Rosenbaum}
\keywords{Sampling, Graph Algorithms, Arboricity,Sublinear-Time Algorithms}
\newcommand{\DrawTreeLB}{

\pgfdeclarelayer{bg}    
\pgfsetlayers{bg,main}  
	
\begin{tikzpicture}[trim left,trim right=0pt,xshift=-3.3cm, very thick,yscale=0.6,xscale=.6, every node/.style={scale=0.6} ,font=\huge]

\tkzDefPoint[label=above left:$T$](0,0){A}
\tkzDefPoint[label=left:$L$](-2.5,-5){B}
\tkzDefPoint(2.5,-5){C}
\tkzDefPoint[label=left:$D$](-4,-8){D}
\tkzDefPoint(4,-8){E}
\tkzDrawPolygon[](A,B,D,E,C)
\tkzDrawPolygon(A,B,C,A)

\node[circle,draw, minimum size=1cm] (O) at  (0,-1.2) {$r$};
\node[circle,draw, minimum size=.5cm] (u_star) at  (-.75,-4) {$u_*$};

\begin{scope}[xshift=12cm]
\tkzDefPoint[label=above left:$T'$](0,0){A'}
\tkzDefPoint[label=left:$L$](-2.5,-5){B'}
\tkzDefPoint(2.5,-5){C'}
\tkzDefPoint(-4,-8){D'}
\tkzDefPoint(4,-8){E'}
\tkzDrawPolygon(A',B',C',A')

\node[circle,draw, minimum size=.5cm] (O') at  (0,-1.2) {$r'$};

\node[circle,draw, minimum size=.5cm] (up_star) at  (-.75,-4) {$u'$};


\draw (up_star) edge [-latex,bend right] node[above, xshift=-4cm, yshift=.3cm]{} (O) ;



\draw (O') edge [-latex,bend right]  node[above, xshift=-3.25cm, yshift=-2.2cm]{$\Phi$} (u_star) ;
\draw (O') [decorate, decoration=zigzag, densely dotted] -> (up_star.north);

\draw (O) [decorate, decoration=zigzag, densely dotted] -> (u_star.north) node[midway,above] (p) {};

\draw (O') [decorate, decoration=zigzag, densely dotted] -> (up_star.north) node[midway, above] (p') {} ;

\end{scope}

\draw[draw=none, use as bounding box](-2,0) rectangle (20,2);

\begin{pgfonlayer}{bg}    

\draw (p') edge [-latex,bend right, dashed, gray] node[above, xshift=-4cm, yshift=.3cm]{} (p) ;
    \end{pgfonlayer}

\end{tikzpicture}

}
\begin{document}

\maketitle

\begin{abstract}
  In this paper, we revisit the problem of sampling edges in an unknown graph $G = (V, E)$ from a distribution that is (pointwise) almost uniform over $E$. We consider the case where there is some a priori upper bound on the arboriciy of $G$.  Given query access to a graph $G$
  over $n$ vertices
  and of average degree $\davg$ and arboricity at most $\alpha$, we design an algorithm that performs
  $O\!\left(\baf\cdot \frac{\log^3 n}{\eps}\right)$ queries in expectation and returns an edge in the graph such that \emph{every} edge $e \in E$ is sampled with probability $(1 \pm \e)/m$.  The algorithm performs two types of queries: degree queries and neighbor queries.  We show that the upper bound is tight (up to poly-logarithmic factors and the dependence in $\eps$), as $\Omega\!\left(\baf\right)$ queries are necessary for the easier task of sampling edges from any distribution over $E$ that is close to uniform in total variational distance. We also prove that even if $G$ is a tree (i.e., $\alpha = 1$ so that $\baf=\Theta(1)$), $\Omega\!\left(\frac{\log n}{\loglog n}\right)$ queries are necessary to sample an edge from any distribution that is pointwise close to uniform, thus establishing that a $\poly(\log n)$ factor is necessary for constant $\alpha$.
  Finally we show how our algorithm can be applied to obtain a new result on approximately counting subgraphs, based on the recent work of Assadi, Kapralov, and Khanna (ITCS, 2019).
\end{abstract}

\ifnum\icalp=0
\thispagestyle{empty}
\newpage
\thispagestyle{empty}
\tableofcontents
\newpage
\setcounter{page}{1}
\fi

\section{Introduction}
\label{sec:introduction}

Let $G=(V,E)$ be a graph over $n$ vertices and $m$ edges.
We consider the problem of sampling an edge in $G$ from a pointwise almost uniform distribution over $E$. That is, for each edge $e\in E$, the probability  that $e$ is returned is $(1\pm \eps)/m$, where $\eps$ is a given approximation parameter.
An algorithm for performing this task has random access to the vertex set $V = \{1,\dots,n\}$
and can perform queries to $G$.
The allowed queries are (1) \emph{degree queries} denoted $\deg(v)$ (what is the degree, $d(v)$, of a given vertex $v$) and (2) \emph{neighbor queries} denoted $\nbr(v,i)$ (what is the $i\th$ neighbor of $v$).\footnote{If $i>d(v)$ then a special symbol, e.g. $\bot$, is returned. } We refer to this model as \emph{the uniform vertex sampling model}.

Sampling edges almost uniformly is a very basic sampling task. In particular it gives the power to sample vertices with probability approximately proportional to their degree, which is a  useful primitive.
Furthermore, there are sublinear algorithms that are known to work when given access to uniform edges
(e.g.,~\cite{Assadi2018}) and can be adapted to the case when the distribution over the edges is almost uniform (see Section~\ref{subsec:intro-application} for details).
An important observation is that in many cases it is crucial that the sampling distribution is pointwise-close to uniform rather than close with respect to the Total Variation Distance (henceforth TVD) --
see the discussion in~\cite[Sec. 1.1]{Eden2018-sosa}.

Eden and Rosenbaum~\cite{Eden2018-sosa} recently showed that $\Theta^*(\ua)$ queries are both sufficient and necessary for sampling edges almost uniformly,
where $\davg = 2m/n$ denotes the average degree in the graph.
(We use the notation $O^*$ to suppress factors that are polylogarithmic in $n$ and polynomial in $1/\e$.)
The instances for which the task is difficult (i.e., for which $\Omega(\ua)$ queries are necessary), are characterized by having very dense subgraphs, i.e., a subgraph with average degree $\Theta(\sqrt m)$. Hence, a natural question is whether it is possible to achieve lower query complexity when some a apriori bound on the density of subgraphs is known. A well studied measure for bounded density (``everywhere'') is the graph \emph{arboricity} (see Definition~\ref{def:arboricity} below).
Indeed there are many natural families of graphs that have bounded arboricity such as graphs of bounded degree, bounded treewidth or bounded genus, planar graphs, graphs that exclude a fixed minor and many other graphs.
In the context of social networks, preferential attachment graphs  and additional generative models exhibit bounded arboricity~\cite{barabasi1999emergence, baur2007generating,bauer2010enumerating}, and this has also been empirically validated for many real-world graphs~\cite{goel2006bounded, eppstein2011listing, shin2018patterns}.

We describe a new algorithm for sampling edges almost uniformly whose runtime is $O^*(\ba)$ where $\alpha$ is an upper bound on the arboricity of $G$. In the extremal case that $\alpha = \Theta(\sqrt{m})$, the runtime of our algorithm is the same as that of~\cite{Eden2018-sosa} (up to poly-log factors). For smaller $\alpha$, our algorithm is strictly faster. In particular for $\alpha = O(1)$, the new algorithm is \emph{exponentially} faster than that of~\cite{Eden2018-sosa}. We also prove matching lower bounds, showing that for all ranges of $\alpha$, our algorithm is query-optimal, up to polylogarithmic factors and the dependence in $1/\eps$.

Furthermore, while not as simple as the algorithm of~\cite{Eden2018-sosa}, our algorithm is still easy to implement and does not incur any large constants in the query complexity and running time, thus making in it suitable for practical applications.

\subsection{Problem definition}

In order to state our results precisely, we define the notion of pointwise-closeness of probability distributions (cf.~\cite{Eden2018-sosa}) and arboricity of a graph.

\begin{dfn}
  \label{dfn:pointwise-close}
  Let $D$ be a fixed probability distribution on a finite set $X$. We say that a probability distribution $\widehat{D}$ is \dft{pointwise  $\e$-close to $D$} if for all $x \in X$,
  \[
  \abs{\widehat{D}(x) - D(x)} \leq \e D(x)\,,\quad\text{or equivalently}\quad 1 - \e \leq \frac{\widehat{D}(x)}{D(x)} \leq 1 + \e\,.
  \]
  If $D = U$, the uniform distribution on $X$, then we say that $\widehat{D}$ is \dft{pointwise $\e$-close to uniform}.
\end{dfn}
For the sake of conciseness, from this point on, unless explicitly stated otherwise, when we refer to an edge sampling algorithm, we mean an algorithm that returns edges according to a distribution that is pointwise-close to uniform.

\begin{dfn}\label{def:arboricity}
  Let $G = (V, E)$ be an undirected graph. A forest $F = (V_F, E_F)$ (i.e., a graph containing no cycles) with vertex set $V_F = V$ and edge set $E_F \subseteq E$ is a \dft{spanning forest} of $G$. We say that a family of spanning forests $F_1, F_2, \ldots, F_k$ \dft{covers} $G$ if $E = \bigcup_{i = 1}^k E_{F_i}$. The \dft{arboricity} of $G$, denoted $\alpha(G)$, is the minimum $k$ such that there exists a family of spanning forests of size $k$ that covers $G$.
\end{dfn}

An edge-sampling algorithm is given as input an approximation parameter  $\eps \in (0,1)$ and a parameter $\alpha$ which is an upper bound on the arboricity of $G$.
The algorithm is required to sample 
edges according a distribution that
is pointwise $\eps$-close to uniform. To this end the algorithm is given query access to $G$.
In particular we consider the aforementioned uniform vertex sampling model.

\subsection{Results}
We prove almost matching upper and lower bounds on the query complexity of sampling an edge
according to a distribution that is pointwise-close to uniform when an upper bound on the arboricity of the graph is known. The first lower bound stated below (Theorem~\ref{thm:general-lb}) holds even for the easier task of sampling from a distribution that is close to uniform in TVD.

\begin{thm}
  \label{thm:upper-bound}
  There exists an algorithm $\cA$
  that for any $n$, $m$, $\alpha$, and graph $G = (V, E)$ with $n$ nodes, $m$ edges, and arboricity at most $\alpha$,
  satisfies the following.
  Given $n$ and $\alpha$, $\cA$ returns an edge $e \in E$ sampled from a distribution $\widehat{U}$ that is pointwise $\e$-close to uniform using $O\left(\baf\cdot \frac{\log^3 n}{\eps}\right)$ degree and  neighbor queries in expectation.
\end{thm}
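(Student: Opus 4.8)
The plan is to build $\cA$ as a sequence of independent \emph{attempts}, where a single attempt makes only $O(\poly\log n/\poly(\eps))$ queries and either outputs \FAIL{} or outputs an edge, in such a way that \emph{every} edge $e\in E$ is output by a given attempt with the \emph{same} probability $p=\Omega^*(\davg/\alpha)$ (exactly, in an idealized version; up to a $(1\pm\eps)$ factor once approximations are introduced below). Granting this, the number of attempts until an edge is returned is geometric with expectation $O^*(\alpha/\davg)$, the total query cost is the product, and — since all edges have (nearly) equal chance of being returned in each attempt — conditioning on ``an edge was output'' yields a distribution pointwise $(1\pm\eps)$-close to uniform, which is exactly what Definition~\ref{dfn:pointwise-close} and Theorem~\ref{thm:upper-bound} demand. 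So it suffices to construct one attempt with the stated per-edge probability.

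\textbf{Arboricity-based layering.} Fix a threshold $\theta=\Theta(\alpha\lceil\log n\rceil/\eps)$. Call a vertex \emph{heavy} if its degree exceeds $\theta$, and an edge \emph{heavy} if both endpoints are heavy, \emph{light} otherwise. By Definition~\ref{def:arboricity} the subgraph induced on the heavy vertices has at most $\alpha$ times as many edges as vertices, and the number of heavy vertices is at most $2m/\theta$ (each contributes more than $\theta$ to $\sum_v d(v)=2m$); hence the number of heavy edges is at most $2\alpha m/\theta\le\eps m/\lceil\log n\rceil$. Applying the same dichotomy recursively to the heavy subgraph, which again has arboricity at most $\alpha$, yields a nested chain $E=E_0\supseteq E_1\supseteq\cdots\supseteq E_k=\emptyset$ with $|E_{j}|\le\eps|E_{j-1}|/\lceil\log n\rceil$, so $k=O(\log n)$, and each edge lies in exactly one \emph{layer} $E_{j-1}\setminus E_j$; an edge in layer $j$ has an endpoint that is ``light'' with respect to the graph $(V_{j-1},E_{j-1})$. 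Assign every layer-$j$ edge to its light-in-$(V_{j-1},E_{j-1})$ endpoint of smallest identifier; then every vertex is assigned at most $\theta$ edges of any fixed layer. An attempt now proceeds: sample a uniform vertex $u$, a uniform layer $j\in\{1,\dots,k\}$, and a uniform ``slot'' $s\in\{1,\dots,\theta\}$; identify the $s$-th edge of $u$ that survives into $E_{j-1}$ (if it exists), call it $e=\{u,v\}$, and output $e$ iff $e$ actually lies in layer $j$ and $u$ is the endpoint to which $e$ is assigned. In the idealized version where all layer memberships are known exactly, a layer-$j$ edge $e$ is output by exactly one choice of $(u,j,s)$ — its assigned endpoint, its layer, and the slot pointing to $v$ — so $p=\tfrac{1}{kn\theta}=\Omega^*(\davg/\alpha)$ for \emph{every} $e$, i.e.\ the idealized attempt returns a perfectly uniform edge, and returns \emph{some} edge with probability $\Theta(m/(kn\theta))$.

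\textbf{The main obstacle: realizing the layer tests cheaply.} The difficulty is that deciding whether $u$ (or its neighbor) is ``light in $(V_{j-1},E_{j-1})$,'' and locating its surviving incident edges, requires its degree \emph{restricted} to $E_{j-1}$, which may be as large as its full degree in $G$ and which is itself defined recursively through the chain; computing it exactly is far outside the query budget. The plan is therefore to replace every such quantity by a $(1\pm\eps')$-multiplicative estimate obtained by sampling $O(\log n/\eps'^2)$ neighbors and querying their degrees, recursively down the (at most $k$) levels, and to perform all comparisons — each of the form ``restricted degree vs.\ $\theta$'' or ``restricted degree of $u$ vs.\ of $v$'' — against these estimates. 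Two points need care, and together they constitute the technical heart of the proof: (i) a vertex that is light in some $(V_{j-1},E_{j-1})$ may still have enormous degree in $G$, so the slot mechanism must be implemented so that its (at most $\theta$) surviving incident edges are reachable by slots in $\{1,\dots,\theta\}$ without scanning all of its $G$-neighbors; (ii) the $(1\pm\eps')$ slack, together with vertices whose restricted degree is near the $\theta$ boundary (where the light/heavy test is ambiguous), must be shown to perturb the output probability of each edge by no more than a global $(1\pm\eps)$ factor after propagation through the $O(\log n)$ levels. Both are handled by inflating $\theta$ by the extra $\Theta(\log n/\eps)$ factor, which makes the boundary mass and the estimation error negligible, and by choosing $\eps'=\Theta(\eps/\log n)$; this is precisely why the algorithm can only promise the \emph{pointwise} $(1\pm\eps)$ guarantee rather than exact uniformity, and it is the step I expect to require the most work.

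\textbf{Putting it together.} A single attempt touches $k=O(\log n)$ levels, each with an $\tilde O(1/\eps)$-sample estimation, plus lower-order bookkeeping, for $O(\log^3 n/\eps)$ queries after the parameters are tuned; it returns an edge with probability $\Theta(m/(kn\theta))=\Omega^*(\davg/\alpha)$; hence the expected number of attempts is $O^*(\alpha/\davg)$ and the expected total query complexity is $O\!\left(\baf\cdot\frac{\log^3 n}{\eps}\right)$. Conditioned on outputting an edge, the distribution is pointwise $(1\pm\eps)$-close to uniform over $E$ by the preceding two steps, which is the conclusion of Theorem~\ref{thm:upper-bound}. (An alternative implementation of the heavy part, replacing the layered recursion by short random walks on the successive heavy subgraphs to ``catch'' deep edges at the correct rate, seems plausible as well and may simplify point (i) above; the layered view is the one I would develop first.)
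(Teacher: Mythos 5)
Your decomposition is sound and in the right spirit (it is a cousin of the paper's Lemma~\ref{lem:layered-partition}, and the $2\alpha/\theta$ contraction per level correctly bounds the depth by $O(\log n)$), and the overall ``uniform-per-attempt, repeat until success'' framework matches the paper. But the algorithmic realization you sketch has a gap that I believe is fatal, and you half-recognize it in your point~(i). The attempt is supposed to ``identify the $s$-th edge of $u$ that survives into $E_{j-1}$'' using a slot $s\in\{1,\dots,\theta\}$, but a vertex that is light in $(V_{j-1},E_{j-1})$ was, by construction, \emph{heavy} at every earlier level, so its degree in $G$ (and even in $E_{j-2}$) can vastly exceed $\theta$, and the at-most-$\theta$ surviving edges sit in arbitrary positions in its adjacency list. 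There is no consistent map from $\{1,\dots,\theta\}$ to those edges that you can evaluate without scanning the neighborhood: to test whether a given incident edge $\{u,w\}$ is in $E_{j-1}$ you must determine whether $w$ was heavy at all levels $0,\dots,j-2$, which in turn requires estimating $w$'s \emph{restricted} degree at each such level, and each such estimate is itself built from restricted-degree estimates of $w$'s sampled neighbors. This recursion branches: with $O(\log n/\eps'^2)$ samples per level and depth $k=\Theta(\log n)$, a single attempt costs $(\log n/\eps'^2)^{\Theta(\log n)}$ queries, not $O(\log^3 n/\eps)$. Inflating $\theta$ by $\log n/\eps$ shrinks the boundary mass but does nothing to make restricted adjacency lists navigable, so neither (i) nor the cost explosion is fixed by that tuning.

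The paper sidesteps exactly this obstacle by never asking the algorithm to decide layer membership at all, except for the trivial case ``is $d(v)\le\theta$?'' (membership in $L_0$). Its layering is a vertex partition where $L_{i+1}$ collects vertices with a $(1-\beta)$-fraction of their neighbors already in $L_{\le i}$, and the algorithm simply performs a random walk of a \emph{uniformly random} length $j\in[0,\ell]$ starting from an $L_0$-vertex chosen by rejection sampling (accept $u$ with probability $d(u)/\theta$), aborting if the walk ever re-enters $L_0$. The only per-step test is the $L_0$ degree test. The decomposition is used purely in the analysis: Lemma~\ref{lem:ub} gives $P_{\le\ell}[v]\le d(v)/(n\theta)$ by a one-line induction, and Lemma~\ref{lem:lb} gives $P_{\le\ell}[v]\ge(1-\beta)^i d(v)/(n\theta)$ for $v\in L_i$ by exploiting the $(1-\beta)$-fraction property, and averaging over $j$ makes the two bounds match up to $(1-\beta)^\ell=1-\Theta(\eps)$. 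Your parenthetical ``alternative implementation'' via random walks on the successive heavy subgraphs is in fact the route that works; the layered-navigation route you chose to develop cannot meet the query budget without a fundamentally new idea for locating surviving edges.
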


In  Section~\ref{subsubsec:intro-alg} we provide a high-level presentation of
the algorithm referred to in Theorem~\ref{thm:upper-bound}
 and shortly discuss
 how it differs from  the algorithm in~\cite{Eden2018-sosa} (for the case that there is no given upper bound on the arboricity).

We next state our lower bound, which matches the upper bound in Theorem~\ref{thm:upper-bound} up to
a polylogarithmic dependence on $n$  (for constant $\eps$).

\begin{thm}
  \label{thm:general-lb}
    Fix $\e \leq 1/6$ and let $n,m$ and $\alpha$ be parameters such that $\alpha = \alpha(n) \leq \sqrt m$ and $m\leq n\alpha$. Let $\calG^\alpha_{n,m}$ be the family of graphs with $n$ vertices, $m$ edges and arboricity at most $\alpha$. Then any algorithm $\cA$ that for any $G \in \calG^\alpha_{n,m}$ samples edges in $G$ from a distribution that is $\e$-close to uniform in total variation distance---and in particular, any distribution that is pointwise $\e$-close to uniform---requires $\Omega\paren{\ba}$ queries in expectation.
\end{thm}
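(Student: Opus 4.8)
The plan is to construct a family of hard instances in $\calG^\alpha_{n,m}$ on which any algorithm making few queries cannot "see" most edges, and then argue that an algorithm that has not discovered a particular edge $e$ cannot output $e$ with probability close to $1/m$. Concretely, I would build a graph that consists of a small dense "gadget" together with a large sparse remainder. The gadget is a clique-like (or bipartite) subgraph on roughly $\Theta(\alpha)$ vertices carrying $\Theta(\alpha^2)$ edges; note this has arboricity $\Theta(\alpha)$, which is permitted. The remainder of the graph is a disjoint union of low-degree structures (e.g., paths or stars of constant/low degree) on the remaining $\approx n$ vertices, arranged so that the total edge count is exactly $m$; since $m \le n\alpha$ and $\alpha \le \sqrt m$, there is enough room to place $\Theta(\alpha^2) = O(m)$ edges in the gadget while the rest of the graph accounts for the remaining $\Theta(m)$ edges at average degree $\Theta(m/n)$. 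The key quantitative point is that the gadget contains a constant fraction of all edges but occupies only a $\Theta(\alpha/n)$ fraction of the vertices, and its vertices have degree $\Theta(\alpha)$ while the typical vertex has degree $\Theta(d) = \Theta(m/n)$.

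Next I would set up the indistinguishability argument. Randomize the construction by choosing the identities of the gadget vertices uniformly at random among $[n]$ (and, if using a bipartite gadget, randomizing the internal labeling as well). An algorithm that performs $q$ queries starting from uniformly sampled vertices will, with probability $1 - o(1)$ when $q = o(\alpha/d) = o(\ba)$, never query a gadget vertex at all: a uniformly random vertex lands in the gadget with probability $\Theta(\alpha/n)$, and a $\deg$ query on a non-gadget vertex returns a small value $\Theta(1)$ so following neighbor queries stays within the sparse part and cannot reach the gadget either (the sparse components containing non-gadget vertices are disjoint from the gadget). Hence after $q = o(\alpha/d)$ queries the algorithm's entire view is distributed identically to its view on a graph with no gadget, i.e.\ the gadget edges are information-theoretically invisible. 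Here I would make this formal with a coupling / hybrid argument: the transcript of queries and answers is a deterministic function of the (random) vertex labels, and conditioned on never hitting a gadget vertex the transcript law is independent of which vertices were designated as gadget vertices.

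I would then finish by a probabilistic-method / averaging argument. Since the gadget carries $\ge c m$ edges for a constant $c > 0$, and the algorithm must output \emph{some} edge, but conditioned on the (high-probability) event of not touching the gadget its output distribution is independent of the random placement of the gadget, the probability that the algorithm outputs any fixed gadget edge $e$ is at most $o(1/m)$ on average over the randomized construction — far from the required $(1\pm\e)/m$ (and this already violates $\eps$-closeness in TVD, since the total mass the algorithm places on gadget edges is $o(1)$ while the uniform distribution places $\ge c$ there, giving TVD $\ge c - o(1) > \eps$ for $\eps \le 1/6$). Averaging, there is a fixed graph $G \in \calG^\alpha_{n,m}$ on which the algorithm fails, so $q = \Omega(\ba)$ queries are necessary. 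The main obstacle I anticipate is the careful bookkeeping to simultaneously meet all three constraints — exactly $n$ vertices, exactly $m$ edges, and arboricity at most $\alpha$ — while keeping the gadget both edge-heavy (a constant fraction of $m$) and vertex-light (a $\Theta(\alpha/n)$ fraction), and ensuring the sparse remainder genuinely has maximum degree small enough that random walks of length $q$ cannot escape into the gadget; the regime constraints $\alpha \le \sqrt m$ and $m \le n\alpha$ in the theorem statement are exactly what make this balancing feasible, and the proof should highlight where each is used.
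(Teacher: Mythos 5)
Your high-level plan --- hide a dense ``gadget'' carrying a constant fraction of the edges so that an algorithm that never touches it cannot place $\Omega(1)$ probability mass there, then bound the probability of touching it by the number of queries --- is a sensible direct (information-theoretic) approach, and it is genuinely different from the paper's: the paper instead reduces from two-party set disjointness via the communication-complexity framework of~\cite{LB-CC:ER18}, embedding a potential $\alpha$-regular component $K$ into one of $N = \Theta(n\alpha/m)$ ``slots'' and invoking the $\Omega(N)$ lower bound for $\disj$.

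However, your construction has a quantitative gap that breaks the argument for exactly the regime the theorem cares about. You size the gadget at $\Theta(\alpha)$ vertices and $\Theta(\alpha^2)$ edges, and simultaneously assert it carries a constant fraction of the $m$ edges. These two statements are compatible only when $\alpha = \Theta(\sqrt{m})$; for $\alpha \ll \sqrt{m}$ (the regime where the theorem gives a bound strictly better than $\Omega(n/\sqrt{m})$), a gadget with $\Theta(\alpha^2)$ edges carries only an $\alpha^2/m = o(1)$ fraction of $E$, and an algorithm that simply ignores it can still output a distribution that is $o(1)$-close to uniform in TVD. So the final ``not seeing the gadget implies TVD $\geq c - o(1)$'' step fails. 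Relatedly, your hitting-probability calculation then gives a lower bound of the wrong order: a random vertex lands in a $\Theta(\alpha)$-vertex gadget with probability $\Theta(\alpha/n)$, which would yield $\Omega(n/\alpha)$, a quantity that exceeds the paper's matching upper bound of $\tilde O(n\alpha/m)$ when $\alpha \ll \sqrt m$ --- a sign the construction is miscalibrated.

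The fix is to make the gadget larger, not smaller: take it to be (close to) $\alpha$-regular on $\Theta(m/\alpha)$ vertices with $\Theta(m)$ edges (exactly the role played by $K$ in the paper's embedding). Then it carries a constant fraction of edges, has arboricity $\Theta(\alpha)$, and occupies a $\Theta(m/(n\alpha))$ fraction of the vertices, so $q$ uniform vertex samples hit it with probability $O(qm/(n\alpha))$, giving the intended $\Omega(n\alpha/m) = \Omega(\alpha/\davg)$ bound. With that correction, and the remaining bookkeeping you already anticipate (disjoint low-degree components elsewhere so neighbor/pair queries cannot escape into the gadget, and a coupling argument conditioned on not hitting it), your direct approach should go through and would be a legitimate alternative to the paper's communication-complexity reduction.
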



When $\alpha$ is a constant, then (assuming that $m = \Omega(n)$)
the lower bound in Theorem~\ref{thm:general-lb} is simply $\Omega(1)$, while
Theorem~\ref{thm:upper-bound} gives an upper bound of $O(\log^3 n)$ (for constant $\epsilon$).
We prove that an almost linear dependence on $\log n$ is necessary, even for the case that $\alpha = 1$
(where the graph is a tree).

\begin{thm}
  \label{thm:tree-lb}
  Fix $\e \leq 1/6$, and let $\calT_n$ be the family of trees on $n$ vertices.
  Then any algorithm $\cA$ that for any $G \in \calT_n$ samples edges in $G$ from a distribution that is pointwise $\e$-close to uniform requires $\Omega\paren{\frac{\log n}{\loglog n}}$ queries in expectation.
\end{thm}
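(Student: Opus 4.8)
The plan is to use Yao's minimax principle: it suffices to produce a distribution $\mathcal{D}$ over $n$-vertex trees for which every \emph{deterministic} procedure making at most $q = c\log n/\loglog n$ queries fails, with probability bounded away from $0$ over $T\sim\mathcal D$, to output an edge from a distribution that is pointwise $\e$-close to uniform on $E(T)$. I would take $\mathcal D$ to be a ``shallow but bushy'' tree: a complete $d$-ary tree of depth $\Delta$ with $d = \Theta(\log n)$ and $\Delta = \Theta(\log n/\loglog n)$ (so that $d^{\Theta(\Delta)} = n$), equipped with a uniformly random bijection between its vertices and the label set $[n]$ and, if necessary, augmented by fixed pendant gadgets chosen so that the degree of a vertex and the structure of its radius-$q$ neighborhood do not reveal its depth. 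The random labeling is there to make ``guessing'' an edge useless; the gadgets are there to prevent the algorithm from orienting itself (in the bare $d$-ary tree one can locate the root by sampling a depth-one vertex and recognizing it by its degree, so some homogenization is essential).

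The heart of the argument is a ``cannot reach the hard edges'' lemma. Fix a depth threshold $j_0$ and let $S \subseteq E(T)$ be the set of edges at depth at most $j_0$; the parameters $d,\Delta,j_0$ are chosen so that $j_0+q \le \Delta - \Omega(\Delta)$ and so that the obligatory probability mass $|S|/m$ that a pointwise-$\e$-close sampler must place on $S$ strictly exceeds the error terms below. First, after $q$ adaptive queries the set of vertices whose labels have been observed is a union of at most $q$ connected subtrees, one grown from each uniformly sampled start vertex, each of size at most $q$; since $T$ is a tree, an observed vertex at depth $\le j_0$ forces the corresponding start vertex to lie at depth $\le j_0+q$, and the number of such vertices is $d^{O(j_0+q)} \ll n/q$, so a union bound shows that with probability $1-o(1)$ no vertex incident to an $S$-edge is ever observed. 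Second, conditioned on the entire transcript — which mentions only labels of observed vertices — the restriction of the random bijection to the \emph{unobserved} vertices remains uniform, so for any unobserved vertex the posterior law of its label is uniform over the unobserved labels; hence whatever fixed pair of labels the algorithm emits, the probability that this pair is an $S$-edge is $O(|S|/n^2)$. Combining, $\Pr[\text{output}\in S] \le o(1) + O(|S|/n^2) < (1-\e)\,|S|/m$, so averaging over $T\sim\mathcal D$ yields a specific tree and a specific edge of $S$ whose output probability is below $(1-\e)/m$, violating pointwise $\e$-closeness. The contrapositive gives the $\Omega(\log n/\loglog n)$ lower bound on the expected number of queries.

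The main obstacle I anticipate is the interplay between the size of $S$ and the error terms: the naive choice ``$S=$ edges incident to the root'' is far too small, since the probability of wandering near the root within $q$ steps already dominates $|S|/m$; one must instead take $S$ deeper and obtain a genuinely small reaching probability by tracking the biased ``climb'' an exploration performs — each net level gained toward the root costs a factor $\approx d$ in probability — and then tune $d,\Delta,j_0,q$ against one another. This balancing act is precisely what forces a $\loglog n$ in the denominator rather than a cleaner $\log n$, and getting it to come out exactly right is the delicate part. The second fiddly point is the design of the homogenizing gadgets and a rigorous proof of the ``radius-$q$ neighborhoods do not reveal depth'' property, which is what confines the adaptive exploration to its starting region; the conditional-uniformity statement for unobserved labels must also be phrased with care because $\nbr$ queries return an index into an (adversarial) ordering of the neighbors. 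The remaining steps — the union bound, the averaging, and the passage from ``fails with constant probability'' to a bound on the \emph{expected} query complexity via Markov's inequality — are routine.
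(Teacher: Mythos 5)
Your construction — a complete $d$-ary tree with $d = \Theta(\log n)$ and depth $\Theta(\log n/\loglog n)$, randomly labeled — is essentially the paper's construction, and the reduction to showing the algorithm cannot place enough probability on near-root edges is also the paper's starting point. (One small note: the paper sidesteps the need for ``homogenizing gadgets'' by taking the root to have $k$ children and every other internal vertex to have $k-1$ children plus a parent, so all internal vertices have the same degree $k$ and depth is not locally detectable.) However, the core of your quantitative argument has a genuine gap. You bound
$\Pr[\text{output}\in S] \le o(1) + O(|S|/n^2)$,
where the $o(1)$ is the probability of ever observing a vertex at depth at most $j_0 + q$, and then assert this is less than $(1-\e)|S|/m$. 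But that $o(1)$ term is roughly $q\,d^{\,j_0+q}/n$, while $|S|/m$ is roughly $d^{\,j_0}/n$; the ratio is $q\,d^{\,q}$, which is enormous. So the probability of \emph{observing} a vertex near $S$ strictly dominates the target probability $|S|/m$ for every choice of $j_0,q$, and the chain of inequalities fails no matter how you tune the parameters. Taking $S$ deeper only worsens things, since both sides scale by $d^{\,j_0}$. You flag this issue yourself (``the probability of wandering near the root within $q$ steps already dominates $|S|/m$'') and gesture toward a ``biased climb'' bookkeeping argument, but you do not carry it out, and as written your explicit bound contradicts your caveat.

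The paper's resolution is not a sharper bound on the probability of observing a near-root vertex (that probability really is not small enough), but rather a conditional argument exploiting symmetry: \emph{even if} the algorithm does observe a vertex $u$ at depth at most $L$, the portion of the tree it explores from $u$ is consistent with $u$ being at any of the $\approx k^L = n_c$ near-root positions, and no observed label is distinguishable from any other as ``the root.'' The paper formalizes this by having the answering process maintain a separate auxiliary tree $T'$ for queries descending from the one critical hit, and only at the end embedding $T'$ into the true tree by choosing a uniformly random critical vertex $u_*$ as the image of $T'$'s root; Claim~\ref{clm:embed} then shows every vertex of $T'$ is equally likely to land on the true root, so $\Pr[\text{success}\mid\Etwo] = 1/n_c$. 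Combined with $\Pr[\Etwo] = O(Q n_c/n)$, the contribution of this case is $O(Q/n)$, which is $o(\log n/n)$ when $Q = O(\log n/\loglog n)$. This conditional factor of $1/n_c$ — not a tighter union bound on reaching $S$ — is exactly the missing step in your proposal; without some version of it the argument does not close.
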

We note that both of our lower bounds also hold when the algorithm is also given access to \emph{pair queries} $\pair(u,v)$ (is there an edge between $u$ and $v$).

\subsection{Discussion of the results}\label{subsec:intro-res}

\ifnum\icalp=0
\newcommand{\ourpar}[1]{\paragraph{#1}}
\else
\newcommand{\ourpar}[1]{\smallskip\noindent{\bf #1}~}
\fi

\ourpar{The arboricity captures the complexity of sampling edges.}
The simplest algorithm for sampling edges uniformly is based on rejection sampling.
Namely, it repeats the following until an edge is output: Sample a uniform vertex $u$, flip a coin with bias $d(u) / \dmax$, where $\dmax$ the maximum degree, and if the outcome is {\HEADS}, then output a random edge $(u,v)$ incident to $u$.
 The expected complexity of rejection sampling grows like $\dmax/\davg$, that is, linearly with the maximum degree. As noted earlier, Eden and Rosenbaum~\cite{Eden2017} show that this dependence on the maximum degree is not necessary (for approximate sampling), as $O^*(\sqrt{m}/\davg)$ queries and time always suffice, even when the maximum degree is not bounded (e.g., is $\Theta(n)$). However, if the maximum degree is bounded, and in particular if $\dmax = o(\sqrt{m})$, then rejection sampling has lower complexity than the~\cite{Eden2017} algorithm
 (e.g., in the case that the graph is close to regular, so that $\dmax = O(\davg)$, we get complexity $O(1)$).

 Since the arboricity of a graph is both upper bounded by $\dmax$ and by $\sqrt{m}$, our algorithm can be viewed as ``enjoying both worlds''.
 Furthermore, our results can be viewed as showing that the appropriate complexity measure for sampling edges is not the maximum degree but rather the maximum average degree (recall that the arboricity $\alpha$ measures the maximum density of any subgraph of $G$ -- for a precise statement, see Theorem~\ref{thm:nash-williams}).

\ourpar{Approximately counting the number of edges in bounded-arboricity graphs.}
As shown by Eden, Ron and Seshadhri~\cite{Eden2017}, $\Theta^*(\ba)$ is
also the complexity of estimating the number of edges in a graph when given a bound $\alpha$ on the arboricity of the graph. 
Their algorithm improves on the previous known bound of $O^*\left(\ua\right)$ by Feige~\cite{Feige2006} and Goldreich and Ron~\cite{Goldreich2008} (when the arboricity is $o(\sqrt{m})$). However,
other than the complexity, our algorithm for sampling edges and the algorithm of~\cite{Eden2017}
for estimating the number of edges do not share any similarities,
in particular, as the result of~\cite{Eden2017} is allowed to ``ignore'' an $\eps$-fraction of the graph edges.

Furthermore, while
the complexity of sampling and of approximate counting of edges are the same
(up to $\log n$ and $1/\eps$ dependencies), 
we have preliminary results showing that for other subgraphs this is not necessarily the case. Specifically, there exist graphs with constant arboricity for which estimating the number of triangles can be done using $O^*(1)$ queries, but pointwise-close to uniform sampling requires $\Omega(n^{1/4})$ queries in expectation.

\ourpar{On the necessity of being provided with an upper bound on the arboricity.}
While our algorithm does not require to be given any bound on the average degree $\davg$, it
must be provided with an upper bound $\alpha$ on the arboricity of the given graph.
To see why this is true, consider
the following two graphs. The first graph consists of a perfect matching between its vertices,
so that both its average degree and its arboricity are $1$.
For $\tilde{\alpha}>1$, the second graph consists of a perfect matching over $n - n/\tilde{\alpha}$ vertices and a $\tilde{\alpha}$-regular
graph over the remaining $n/\tilde{\alpha}$ vertices. This graph has an average degree of roughly $2$, and arboricity $\tilde{\alpha}$. If an edge-sampling algorithm is not provided with an appropriate upper bound on the arboricity, but is still required to run in (expected) time that grows like the ratio between the arboricity and the average degree, then it means it can be used to distinguish between the two graphs. However, assuming a random labeling of the vertices of the two graphs, this cannot be done in time $o(\tilde{\alpha})$.

\ourpar{Pointwise closeness vs. closeness with respect to the TVD.}
The lower bound of Theorem~\ref{thm:general-lb} holds for sampling from a distribution that is close to uniform with respect to TVD, and \emph{a fortiori} to sampling from pointwise almost uniform distributions. In contrast, the lower bound of Theorem~\ref{thm:tree-lb} does not apply to sampling edges from a distribution that is $\e$-close to uniform in TVD. Indeed, a simple rejection sampling procedure (essentially ignoring all nodes with degrees greater than $1/\e$) can sample edges from a distribution that is $\e$-close to uniform in TVD using $O(1/\e)$ queries in expectation. Thus, Theorem~\ref{thm:tree-lb} gives a separation between the tasks of sampling from distributions that are pointwise-close to uniform versus close to uniform in TVD. The general upper and lower bounds of Theorems~\ref{thm:upper-bound} and~\ref{thm:general-lb} show that the separation between the complexity of these tasks can be at most poly-logarithmic for any graph.

\subsection{An application to approximately counting subgraphs}\label{subsec:intro-application}
In a recent paper~\cite{Assadi2018}, Assadi, Kapralov, and Khanna made significant progress on the question of counting arbitrary subgraphs in a graph in sublinear time.
Specifically, they provide an algorithm that estimates the number of occurrences of any arbitrary subgraph $H$ in $G$, denoted by $\#H$, to within a $(1\pm \eps )$-approximation with high probability. The running time of their algorithm is  $O^*\left(\frac{m^{\rho(H)}}{\#H}\right)$, where $\rho(H)$ is the fractional edge cover of $H$.\footnote{The fractional edge cover of a graph $H=(V_H, E_H)$ is a mapping $\psi:E_H \rightarrow [0,1]$ such that for each vertex $a \in V_H$, $\sum_{e \in E_H, a \in e} \psi(e) \geq 1.$ The fractional edge-cover number $\rho(H)$ of $H$ is the minimum value
of $\sum_{e \in E_H} \psi(e)$ among all fractional edge covers $\psi$.}
Their algorithm assumes access to uniform edge samples in addition to degree, neighbor and pair queries.
As noted in~\cite{Assadi2018}, their algorithm can be adapted to work with edge samples that are pointwise $\eps$-close to uniform (where this is not true for edge samples that are  only $\eps$-close to uniform in TVD---e.g., when all the occurrences of $H$ are induced by an $\eps$-fraction of the edges). Invoking the algorithm of~\cite{Assadi2018}, and replacing each edge sample with an invocation of  \se\ results in the following corollary.
\begin{tcor}
	Let $G$ be a graph $G = (V, E)$ with $n$ nodes, $m$ edges, and arboricity at most $\alpha$.
	There exists an algorithm that, given $n, \alpha, \eps\in (0,1)$, a subgraph $H$ and query access to $G$, returns a $(1\pm\eps)$ approximation of the number of occurrences of $H$ in $G$, denoted $\# H$.  The expected query complexity and running time of the algorithm are
	\[ O^*\left(\min\left\{m,\frac{n \alpha \cdot m^{\rho(H)-1}}{\# H}\right\}\right)
\;\;\;\mbox{and}\;\;\;
   O^*\left(\frac{n \alpha \cdot m^{\rho(H)-1}}{\# H}\right),
	\]
	respectively,
where $\rho(H)$ denotes the fractional edge cover of $H$, and the allowed queries are degree, neighbor and pair queries.
\end{tcor}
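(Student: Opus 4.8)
The plan is to run the algorithm of Assadi, Kapralov and Khanna~\cite{Assadi2018} essentially verbatim, replacing every call to its uniform-edge-sample oracle by an invocation of \se\ (the algorithm of Theorem~\ref{thm:upper-bound}), and then to account for the cost. Recall what the AKK algorithm provides: given $n$, a (geometrically searched) guess for $\#H$, $\eps$, access to degree, neighbor and pair queries, and an oracle returning uniform edges, it outputs a $(1\pm\eps)$-approximation of $\#H$ with high probability, making $O^*\!\left(m^{\rho(H)}/\#H\right)$ oracle calls and other queries in total and running in the same time. As noted in~\cite{Assadi2018}, its analysis is robust to replacing the oracle by one that returns edges from a distribution that is pointwise $\eps'$-close to uniform, for $\eps'$ a sufficiently small function of $\eps$. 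The first thing I would do is make this robustness explicit: re-examine the AKK estimator and check that, when each sampled edge is drawn from a pointwise $\eps'$-close-to-uniform distribution and the samples are mutually independent, its expectation picks up only a $(1\pm\eps')$ multiplicative bias and its variance bound degrades by a $(1+\eps')^{O(1)}$ factor; then fix $\eps' = \Theta(\eps)$ (or, in the worst case, $\eps' = \eps/\poly(\log n)$, which changes only $O^*$ factors).

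Next I would instantiate the oracle with \se\ run with approximation parameter $\eps'$. By Theorem~\ref{thm:upper-bound}, each such call returns an edge from a pointwise $\eps'$-close-to-uniform distribution and costs $O\!\left(\baf\cdot\frac{\log^3 n}{\eps'}\right) = O^*\!\left(\frac{n\alpha}{m}\right)$ queries in expectation. Crucially, the cost and the output of each \se\ call depend only on $G$ and on that call's internal randomness (the call does not take the AKK algorithm's state as input), so the calls are mutually independent and independent of the AKK algorithm's state, which is exactly what the robustness argument of the previous paragraph requires. Writing $N$ for the random number of oracle calls made by AKK, a Wald-type argument — conditioning on the history just before each call, so that its conditional expected cost is still $O^*(n\alpha/m)$ — gives that the total expected cost of all \se\ calls is $\E[N]\cdot O^*(n\alpha/m) = O^*\!\left(m^{\rho(H)}/\#H\right)\cdot O^*\!\left(n\alpha/m\right) = O^*\!\left(\frac{n\alpha\cdot m^{\rho(H)-1}}{\#H}\right)$. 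Since every $n$-vertex, $m$-edge graph has arboricity at least $m/n$ (e.g. by the Nash--Williams characterization, Theorem~\ref{thm:nash-williams}), we have $n\alpha/m \geq 1$, so this term also dominates the $O^*\!\left(m^{\rho(H)}/\#H\right)$ degree, neighbor and pair queries that AKK performs directly. This establishes the stated running-time bound.

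Finally, to obtain the $\min\{m,\cdot\}$ in the query complexity, I would run the above procedure in lockstep with the brute-force algorithm that simply reads all of $G$ (all $n$ degree queries and all $2m$ neighbor queries, i.e. $O(m+n)$ queries) and then computes $\#H$ exactly, advancing the two computations one query at a time and halting as soon as either finishes. The expected query complexity is then $O^*\!\left(\min\{m,\; n\alpha\,m^{\rho(H)-1}/\#H\}\right)$, while the running time stays $O^*\!\left(n\alpha\,m^{\rho(H)-1}/\#H\right)$: in the branch where $G$ is read in full we still pay for the exact evaluation of $\#H$, which is not bounded by $O^*(m)$ — this is precisely why the minimum appears only in the query bound and not in the time bound. (The unknown value of $\#H$ is handled by the standard geometric-search wrapper already used in~\cite{Assadi2018}, at the cost of $O^*$ factors.)

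The main obstacle is the first step: pinning down exactly in what sense the AKK estimator tolerates pointwise-approximate samples — which $\eps'$ suffices, and verifying that the adaptivity of AKK's queries does not interact badly with the fact that \se's guarantee is per-call rather than joint — since the excerpt only asserts that such an adaptation exists. Everything afterward (the Wald-type accounting for a random number of independent subroutine calls, the inequality $\alpha \geq m/n$, and the lockstep-with-brute-force trick for the $\min$) is routine.
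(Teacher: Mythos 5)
Your approach matches the paper's exactly: the paper derives the corollary by invoking the AKK algorithm and replacing each uniform-edge-sample call with an invocation of \se, relying on the remark in~\cite{Assadi2018} that their analysis tolerates pointwise-close-to-uniform samples. The paper gives no further proof beyond this one-sentence reduction, so your proposal actually supplies more of the bookkeeping (the Wald-style accounting for a random number of adaptive subroutine calls, the inequality $\alpha \geq m/n$ via Nash--Williams to absorb AKK's own query cost, and the lockstep-with-brute-force device to explain why the $\min\{m,\cdot\}$ appears in the query bound but not the time bound) than the paper itself records.
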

Thus, by combining our result with~\cite{Assadi2018}, we extend the known results for approximately counting the number of subgraphs in a graph in the uniform vertex sampling model.
Furthermore, for graphs in which $m =\Theta(n\alpha)$, we obtain the same query complexity and running time of~\cite{Assadi2018} without the assumption that the algorithm has access to uniform edge samples.

\subsection{A high-level presentation of the algorithm and lower bounds}
While our results concern \emph{un}directed graphs $G= (V,E)$, it will be helpful to
view each edge $\set{u,v} \in E$ as a pair of \emph{ordered} edges $(u,v)$ and $(v,u)$.

\subsubsection{The algorithm}\label{subsubsec:intro-alg}

Sampling an (ordered) edge (almost) uniformly is equivalent to sampling a vertex with probability (almost) proportional to its degree. Hence we focus on the latter task. A single iterations of the  algorithm we describe either returns a vertex or outputs \FAIL. We show that the probability that it outputs \FAIL\ is not too large, and that conditioned on the algorithm returning a vertex, each vertex $v$ is returned with probability proportional to its degree up to a factor of $(1\pm \eps)$.

Our starting point is a structural decomposition result for graphs with bounded arboricity (Lemma~\ref{lem:layered-partition}).
Our decomposition defines a partition of the graph's vertices into \emph{levels} $L_0, L_1, \ldots, L_\ell$. For parameters $\theta$ and $\beta$, $L_0$ consists of all vertices with degree at most $\theta$, and for $i>0$, level $L_i$ contains all vertices $v$ that do not belong to previous levels $L_0,\dots, L_{i-1}$, but have at least $(1-\beta) d(v)$ neighbors in these levels. We prove that that for any graph with arboricity at most $\alpha$, for $\theta = \Theta(\alpha\log n/\eps)$ and $\beta = \Theta(\eps/\log n)$, there exists such a partition into layers with $\ell = O(\log n)$ levels.
We stress that the algorithm does not actually construct such a partition, but rather we use the partition in our analysis of the algorithm.\footnote{This decomposition is related to the forest decomposition of Barenboim and Elkin~\cite{BarenboimElkin2010}.
The main difference, which is essential for our analysis, is that the partition we define is based on the number of neighbors that a vertex has in lower levels \emph{relative\/} to its degree, while in~\cite{BarenboimElkin2010} the partition is based on the absolute number of neighbors to higher levels.
}

In order to gain intuition about the algorithm and its analysis, suppose that all vertices in $L_0$ have degree exactly $\theta$, 
and that all edges in the graph are between vertices in consecutive layers. Consider the following {\em random walk\/} algorithm. The algorithm first selects an index $j\in [0,\ell]$ uniformly at random. It then selects a vertex $u_0$ uniformly at random. If $u_0 \in L_0$, then it performs a random walk of length $j$ starting from $u_0$ (otherwise it outputs \FAIL). If the walk did not pass through any vertex in $L_0$ (with the exception of the starting vertex $u_0$), then the algorithm returns the final vertex reached.

First observe that for every $u \in L_0$, the probability that $u$ is returned is $\frac{1}{\ell+1} \cdot \frac{1}{n}$ (the probability that the algorithm selected $j=0$ and selected $u$ as $u_0$). This equals $\frac{d(u)}{(\ell+1)\cdot \theta\cdot n}$ (by our assumption that $d(u) = \theta$ for every $u \in L_0$).  Now consider a vertex $v \in L_1$. The probability that $v$ is returned is at least $\frac{1}{\ell+1}\cdot \frac{(1-\beta)d(v)}{n}\cdot \frac{1}{\theta} = \frac{(1-\beta)d(v)}{(\ell+1)\cdot \theta\cdot n}$ (the probability that the algorithm selected $j=1$, then selected one of $v$'s neighbors $u \in L_0$, and finally selected to take the edge between $u$ and $v$). In general, our analysis shows that for every $i$ and every $v \in L_i$, the probability that $v$ is returned is at least $\frac{(1-\beta)^i d(v)}{(\ell+1)\cdot \theta\cdot n}$. On the other hand, we show that for every vertex $v$, the probability that $v$ is returned is at most $\frac{d(v)}{(\ell+1)\cdot \theta\cdot n}$. By the choice of $\theta$ and $\beta$ we get that each vertex $v$ is returned with probability in the range $[(1-\eps)d(v)\rho(\eps,n),d(v)\rho(\eps,n)]$ for $\rho(\eps,n) = \Theta(\eps/(\alpha n\log^2 n))$.  By repeating the aforementioned random-walk process until a vertex is returned--- $\Theta\!\left(\frac{\alpha n}{m}\cdot \frac{\log^2n}{\eps}\right) = \Theta\!\left(\baf\cdot \frac{\log^2n}{\eps}\right)$ times in expectation---we obtain a vertex that is sampled with probability proportional to its degree, up to $(1\pm \eps)$.

We 
circumvent the assumption that $d(u) = \theta$ for every $u \in L_0$ by rejection sampling: In the first step, if the algorithm samples $u_0 \in L_0$, then it continues with probability $d(u)/\theta$ and fails otherwise. The assumption that all edges are between consecutive levels is not necessary for the analysis described above to hold. The crucial element in the analysis is that for every vertex $v \notin L_i$, where $i>0$, at least $(1-\beta)$ of the neighbors of $v$ belong to $L_0,\dots,L_{i-1}$. This allows us to apply the inductive argument for the lower bound on the probability that $v$ is returned when we average over all choices of $j$ (the number of steps in the random walk).
For  precise details of the algorithm and its analysis, see Section~\ref{sec:upper-bound}.

\ourpar{On the relation to~\cite{Eden2018-sosa}.}
We briefly discuss the relation between our algorithm for bounded-arboricity graphs, which we denote by $\cAERR$ and the algorithm presented in~\cite{Eden2018-sosa} (for the case that no upper bound is given on the arboricity),
which we denote by $\cAER$.
The algorithm $\cAER$ can be viewed as considering a partition of the graph vertices  into just two
layers according to a degree threshold of roughly $\sqrt{m}$.  
It performs a random walk similarly to $\cAERR$, but where the walk has either length $0$ or $1$.
This difference in the number of layers and the length of the walk, is not only quantitative.
Rather, it allows $\cAER$  to determine to which layer does a vertex belong simply according to its degree.
 This is not possible in the case of $\cAERR$ (with the exception of vertices in $L_0$).
Nonetheless, despite the apparent ``blindness'' of $\cAERR$
  to the layers it traverses in the random walk, we can show the following:
 Choosing the length of the random walk uniformly at random and halting in case the walk returns to $L_0$, ensures that each vertex is output with probability approximately proportional to its degree.


\subsubsection{The lower bounds}

\ourpar{The lower bound of \boldmath{$\Omega\!\left(\baf\right)$} for general \boldmath{$\alpha$}.}
In order to prove Theorem~\ref{thm:general-lb}, we employ the method of~\cite{LB-CC:ER18} (which builds upon the paradigm introduced in~\cite{Blais2012}) based on communication complexity. The idea of the proof is to reduce from the two-party communication complexity problem of computing the disjointness function. The reduction is such that (1) any algorithm that samples edges from an almost-uniform distribution reveals the value of the disjointness function with sufficiently large probability, and (2) every allowable query can be simulated in the two-party communication setting using little communication.

\ourpar{The lower bound of \boldmath{$\Omega\!\left(\frac{\log n}{\loglog n}\right)$} for \boldmath{$\alpha = 1$}.}
As opposed to the proof of the lower bound for general $\alpha$, in the case of $\alpha=1$ we
did not find a way to employ the communication-complexity method (which tends to result in compact and
``clean'' proofs).
Instead, we design a direct, albeit somewhat involved, proof from first-principles.

Specifically, in order to prove Theorem~\ref{thm:tree-lb}, we consider a complete tree in which each internal vertex has degree $\log n$ (so that its depth is $\Theta\!\left(\frac{\log n}{\loglog n}\right)$). We then consider the family of graphs that
correspond to all possible labelings of such a tree. As noted in Section~\ref{subsubsec:intro-alg}, sampling edges almost uniformly is equivalent to sampling vertices with probability approximately proportional to their degree. In particular, in our construction, the label of the root should be returned with probability approximately $\log n/n$. We show that any algorithm that succeeds in returning the label of the root of the tree with the required probability must perform $\Omega\!\left(\frac{\log n}{\loglog n}\right)$ queries.

To this end we define a {\em process\/} $\calP$ that interacts with any algorithm $\cA$, answering $\cA$'s queries while constructing a uniform random labeling of the vertices and edges in the tree. The vertices of the tree are assigned random labels in $[n]$, and for each vertex $v$ in the tree, its incident edges are assigned random labels in $[d(v)]$.
We say that $\cA$ {\em succeeds\/}, if after the interaction ends, $\cA$ outputs the label of the root of the tree, as assigned by $\calP$.

Let $L = \setL$ be the lower bound we would like to prove, where $C$ is a sufficiently large constant (so that in particular, $L$ is a (small) constant fraction of the depth of the tree).
Intuitively, $\cA$ would like to ``hit'' a vertex at depth  at most $L$ and then ``walk up the tree'' to the root.
There are two sources of uncertainty for $\cA$. One is whether  it actually hits a vertex at depth at most $L$, and the second is which edges should be taken to go up the tree. Our lower bound argument mainly exploits the second uncertainty,
as we sketch next.

The process $P$ starts with an unlabeled tree (of the aforementioned structure), and assigns labels to its vertices and edges in the course of its interaction with $\cA$.
 Recall that $\calP$ answers the queries of $\cA$ while constructing a uniform labeling. Therefore, whenever $\cA$ asks a query involving a \emph{new} label (i.e., that has not yet appeared in its queries or answers to them), the vertex to which this label is assigned, should be uniformly selected  among all vertices that are not yet labeled.
 We shall say that a vertex is \emph{critical} it its depth is at most $L$.
 As long as no critical vertex is hit, $\cA$ cannot reach the root. 
 This implies that if no critical vertex is hit in the course of its queries, then  the probability that
 $\cA$ succeeds is $O(1/n)$.

 While the probability of hitting a critical vertex is relatively small, it is not sufficiently small to be deemed negligible. However, suppose that $\cA$ hits a critical vertex $u$ at depth $\Delta < L$, which occurs with probability $(\log n)^\Delta/n$. Then, conditioned on this event, each of the $(\log n)^\Delta$ edge-labeled paths from $u$ is equally likely to lead to the root, and the labels of vertices on these paths are uniformly distributed, thus intuitively conveying no information regarding the ``right path''.

 A  subtlety that arises when formalizing this argument is the following. Suppose that in addition to hitting  
 a critical vertex $u$, $\cA$ hits another vertex, $v$, which is not necessarily critical, but is at distance less than $L$ from $u$ (and in particular has depth at most $2L$, which we refer to as \emph{shallow}). Then, a path starting from $v$ might meet a path starting from $u$, hence adding a conditioning that makes the above argument (regarding uniform labelings) imprecise. We address this issue by upper bounding the probability of such an event (i.e., of hitting both a critical vertex and a shallow vertex), and accounting for an event of this type as a success of $\cA$.

\subsection{Related work}\label{subsec:intro-related}

Some of the works presented below were already mentioned earlier in the introduction, but are provided in this subsection for the sake of completeness.

The work most closely related to the present work is the recent paper of Eden and Rosenbaum~\cite{Eden2018-sosa}. In~\cite{Eden2018-sosa}, the authors proved
matching upper and lower bounds of $\Theta^*(\ua)$ for the problem of sampling an edge from an almost uniform distribution in an arbitrary graph using degree, neighbor, and pair queries.

The problem of sampling edges in a graph is closely related to the problem of estimating $m$, the number of edges in the graph. In~\cite{Feige2006}, Feige proved an upper bound of $O^*(\ua)$ for obtaining a $(2 + \e)$-factor multiplicative approximation of $m$ using only degree queries,\footnote{To be precise, Feige~\cite{Feige2006} showes that, given a lower bound $d_0$ on the average degree, $O(\sqrt{n/d_0}/\eps)$ degree queries are sufficient. If such a lower bound is not provided to the algorithm, then a geometric search can be performed, as shown in~\cite{Goldreich2008}.} and shows that it is not possible to go below a factor of $2$ with a sublinear number of degree queries. In~\cite{Goldreich2008}, Goldreich and Ron showed that $\Theta^*(\ua)$ queries are necessary and sufficient to obtain a $(1 + \e)$-factor approximation of $m$ if neighbor queries are also allowed.

Several works prove matching upper and lower bounds on the query complexity of counting the number of triangles~\cite{Eden2015}, cliques~\cite{Eden2018-cliques}, and  star-graphs of a given size~\cite{Gonen2011} using degree, neighbor, and pair queries (when the latter are necessary). Eden Ron and Seshadhri devised algorithms for estimating the number of $k$-cliques~\cite{Eden2018-arboricity} and moments of the degree distribution~\cite{Eden2017} 
whose runtimes are parameterized by the arboricity $\alpha$ of the input graph (assuming a suitable upper bound for $\alpha$ is given to the algorithm as input). These algorithms outperform the lower bounds of~\cite{Eden2018-cliques} and~\cite{Gonen2011} (respectively) in the case where $\alpha \ll \sqrt{m}$.
In~\cite{Eden2018-testing-arboricity}, Eden, Levi, and Ron described an efficient algorithm for distinguishing graphs with arboricity at most $\alpha$ from those that are far from any graph with arboricity $3\alpha$.

Two recent works~\cite{Aliakbarpour2017, Assadi2018} consider a query model that allows uniform random edge sampling in addition to degree, neighbor, and pair queries. In this model, Aliakbarpour et al.~\cite{Aliakbarpour2017} described an algorithm for estimating the number of star subgraphs. In the same model, Assadi et al.~\cite{Assadi2018} devised an algorithm that relies on uniform edge samples as a basic query to approximately count the number of instances of an arbitrary subgraph in a graph.
The results in~\cite{Aliakbarpour2017} and~\cite{Assadi2018} imply that uniform edge samples afford the model strictly more power: the sample complexity of the algorithm of~\cite{Aliakbarpour2017} outperforms the lower bound of~\cite{Gonen2011} for the same task, and the sample complexity of the algorithm of~\cite{Assadi2018} outperforms the lower bound of~\cite{Eden2018-cliques} for estimating the number of cliques. (The results of~\cite{Gonen2011, Eden2018-cliques} are in the uniform vertex sampling model.)


\ifnum\icalp=0
\subsection{Organization}
The rest of the paper is organized as follows.
We describe and analyze our main algorithm, thereby proving Theorem~\ref{thm:upper-bound}, in Section~\ref{sec:upper-bound}. The lower bounds of Theorems~\ref{thm:general-lb} and~\ref{thm:tree-lb} are proven in Sections~\ref{sec:general-lower-bound} and~\ref{sec:tree-lower-bound}, respectively.
\else
\subsubsection*{Organization}
Due to space constraints, in this extended abstract we provide full details only for
the proof of our upper bound (Theorem~\ref{thm:upper-bound}).
The proofs of Theorems~\ref{thm:general-lb} and~\ref{thm:tree-lb}
can be found in the appendix.
\fi 


\section{The Algorithm}
\label{sec:upper-bound}
In this section we describe an algorithm that samples an edge $e$ from an arbitrary graph $G$ with arboricity at most $\alpha$, according to a  pointwise almost uniform distribution. Theorem~\ref{thm:upper-bound} follows from our analysis of the algorithm. In what follows, for integers $i \leq j$, we use $[i,j]$ to denote the set of integers $\{i,\dots,j\}$,
and for a vertex $v$ we let $\Gamma(v)$ denote its set of neighbors.

As noted in the introduction,
sampling edges from a uniform distribution is equivalent to sampling vertices proportional to their degrees. Indeed, if each vertex $v$ is sampled with probability $d(v) / 2 m$, then choosing a random neighbor $w \in \Gamma(v)$ uniformly at random returns the (directed) edge $e = (v, w)$ with probability $1/2m$. Thus, it suffices to sample each vertex $v \in V$ with probability (approximately) proportional to its degree.

\subsection{Decomposing graphs of bounded arboricity}
\label{sec:decomposition}

Before describing the algorithm, we describe a decomposition of a graph $G$ into \emph{layers} depending on its arboricity. We begin by recalling the following characterization of arboricity  due to Nash-Williams~\cite{nash1961edge}.

\begin{thm}[Nash-Williams~\cite{nash1961edge}]
  \label{thm:nash-williams}
  Let $G = (V, E)$ be a graph. For a subgraph $H$ of $G$, let $n_H$ and $m_H$ denote the number of vertices and edges, respectively, in $H$. Then
  \confEqn{
  \alpha(G) = \max_{H} \set{\left\lceil m_H / (n_H - 1)\right\rceil},
}
  where the maximum is taken over all subgraphs $H$ of $G$.
\end{thm}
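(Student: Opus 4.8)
The plan is to prove the two inequalities separately. For the direction $\alpha(G) \geq \lceil m_H/(n_H-1)\rceil$ (valid for every subgraph $H$ with $n_H \geq 2$): write $k = \alpha(G)$ and fix a cover $F_1,\dots,F_k$ of $G$ by spanning forests. For a subgraph $H$, each $F_i$ restricted to the edges of $H$ is a forest on at most $n_H$ vertices and hence has at most $n_H - 1$ edges; since the $F_i$ cover all of $E(H)$, summing over $i$ gives $m_H \leq k(n_H-1)$, so $m_H/(n_H-1) \leq k$, and as $k$ is an integer, $\lceil m_H/(n_H-1)\rceil \leq k$. Taking the maximum over $H$ yields $\max_H\lceil m_H/(n_H-1)\rceil \leq \alpha(G)$.

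The substance is the reverse inequality. Set $k \eqdef \max_H\lceil m_H/(n_H-1)\rceil$; we must exhibit a cover of $G$ by $k$ forests. Work with the graphic matroid $M$ of $G$: its independent sets are exactly the forests, and its rank function is $r(E') = n(E') - c(E')$, where $n(E')$ and $c(E')$ denote the number of vertices and of connected components spanned by an edge set $E' \subseteq E$. Applying the definition of $k$ to each (nontrivial) connected component of the subgraph $(V,E')$ separately and summing gives $|E'| \leq k\,(n(E') - c(E')) = k\cdot r(E')$ for every $E' \subseteq E$. Now invoke the matroid union theorem of Nash-Williams and Edmonds: the rank of the ground set $E$ in the union of $k$ copies of $M$ equals $\min_{E'\subseteq E}\bigl(|E\setminus E'| + k\, r(E')\bigr)$, which by the inequality just established is at least $\min_{E'}(|E\setminus E'| + |E'|) = |E|$, hence equals $|E|$. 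Therefore $E$ partitions into $k$ independent sets of $M$, i.e.\ $k$ forests, so $\alpha(G)\leq k$. Combining the two bounds proves the claimed equality.

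The only real obstacle is this reverse direction, and concretely the matroid-union step: if one prefers not to cite the Nash-Williams--Edmonds theorem, it must be reproved in this special case. The natural self-contained route is induction on $|E|$: delete an edge $e = uv$, cover $G-e$ by $k$ forests $F_1,\dots,F_k$ from the induction hypothesis, and then accommodate $e$ either by adding it directly to some $F_i$ in which $u$ and $v$ lie in different trees, or, failing that, by a sequence of augmenting swaps -- explore, BFS-style, the set of vertices reachable from $u$ via edges that can be rotated between the $F_i$, and argue that if this exploration gets stuck then the explored vertices span a subgraph with more than $k(n_H-1)$ edges, contradicting the choice of $k$. I expect the delicate point to be the bookkeeping here: precisely defining the cross-forest reachability relation and verifying that a stuck state really certifies a violating dense subgraph. (I note that for the purposes of this paper only the easy inequality $\alpha(G)\geq\lceil m_H/(n_H-1)\rceil$ is actually used -- to justify that arboricity controls the maximum subgraph density -- so the hard direction can be taken as a black box.)
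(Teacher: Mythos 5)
Your proposal is correct. Note, however, that the paper itself gives \emph{no} proof of this statement: it is quoted as a classical theorem of Nash-Williams and invoked only as a black box. Your easy direction ($\alpha(G)\geq\lceil m_H/(n_H-1)\rceil$, via counting forest edges inside $H$) is exactly the standard argument and is all the paper ever uses: it is applied in the proof of Lemma~\ref{lem:layered-partition} to bound $m(W_i)\leq\alpha|W_i|$, and it underlies Corollary~\ref{cor:nash-williams} relating degeneracy to arboricity. Your hard direction, reducing to the Nash-Williams--Edmonds matroid union theorem by checking $|E'|\leq k\,r(E')$ componentwise, is the textbook route and is sound; the alternative inductive/augmenting-path sketch you mention is also a known self-contained proof, though as you say it requires careful bookkeeping that you have not carried out. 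Your closing observation---that for this paper's purposes only the easy inequality matters, so the matroid-union machinery can safely be cited rather than reproved---is exactly in the spirit of how the paper treats the theorem.
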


Another folklore result is the connection between the arboricity of a graph and its degeneracy. The degeneracy of  a graph, denoted $\delta$, is the minimum value such that for every subgraph $H\subseteq G$, every vertex $v\in H$ has degree at least $\delta$ (in $H$).

\begin{cor}
  \label{cor:nash-williams}
 For every graph $G$, $\delta\leq 2\alpha(G)$, where $\delta$ is the degeneracy of $G$.
\end{cor}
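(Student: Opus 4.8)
The plan is to derive this directly from the Nash--Williams characterization (Theorem~\ref{thm:nash-williams}). Recall that the degeneracy $\delta$ is the minimum value $d$ such that every subgraph $H \subseteq G$ contains a vertex of degree at most $d$ (equivalently, $G$ admits an ordering of its vertices in which each vertex has at most $d$ neighbors appearing later). So it suffices to show that every nonempty subgraph $H$ of $G$ contains a vertex whose degree \emph{within $H$} is at most $2\alpha(G)$.

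First I would fix an arbitrary nonempty subgraph $H$ of $G$, with $n_H$ vertices and $m_H$ edges. By Theorem~\ref{thm:nash-williams} applied to $G$ (and using that $H$ itself is one of the subgraphs over which the maximum is taken), we have $\alpha(G) \geq \lceil m_H/(n_H-1)\rceil \geq m_H/(n_H-1)$, hence $m_H \leq \alpha(G)\,(n_H - 1) < \alpha(G)\, n_H$. The sum of the degrees of the vertices of $H$ (within $H$) is $2m_H < 2\alpha(G)\, n_H$, so by averaging there must exist a vertex $v \in H$ whose degree in $H$ is strictly less than $2\alpha(G)$, i.e., at most $2\alpha(G) - 1 \leq 2\alpha(G)$. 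Since $H$ was arbitrary, every subgraph of $G$ has a vertex of degree at most $2\alpha(G)$, which is exactly the statement $\delta \leq 2\alpha(G)$.

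I do not anticipate a genuine obstacle here: the only mild point to be careful about is the edge case $n_H = 1$ (an isolated vertex), where the Nash--Williams bound is vacuous but the degree-in-$H$ is $0 \leq 2\alpha(G)$ trivially, so the averaging argument should be phrased to either exclude that case or note it separately. One should also make sure the definition of degeneracy being used matches the "every subgraph has a low-degree vertex" formulation stated in the excerpt, rather than the ordering formulation — but these are equivalent, so either is fine.
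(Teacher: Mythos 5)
Your proof is correct, and the averaging argument from Nash--Williams is exactly the intended (folklore) derivation: the paper states Corollary~\ref{cor:nash-williams} without an explicit proof, so there is nothing to diverge from. Two small remarks: your argument actually yields the slightly stronger bound $\delta \leq 2\alpha(G) - 1$ (since the average degree in any subgraph $H$ is $2m_H/n_H < 2\alpha(G)$ and degrees are integers), and you are right that the paper's one-line phrasing of degeneracy (``every vertex $v\in H$ has degree at least $\delta$'') is garbled --- the standard ``every subgraph has a vertex of degree at most $\delta$'' formulation you use is the one that makes the corollary true and is what is actually needed later in Lemma~\ref{lem:layered-partition}.
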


\begin{dfn}
  Let $G = (V, E)$ be a graph, and $\theta \in \N$, $\beta \in (0, 1)$ parameters. We define a \dft{$(\theta, \beta)$-layering} in $G$ to be the sequence of non-empty disjoint subsets $L_0, L_1, \ldots, L_\ell \subseteq V$ defined by
  \confEqn{
  L_0 = \set{v \in V \sucht d(v) \leq \theta}
}
  and for $i \geq 1$,
  \[
  L_{i+1} = \big\{v \notin L_0 \cup L_1 \cup \cdots \cup L_{i}\, \big |\, \abs{\Gamma(v) \cap \paren{L_0 \cup \cdots \cup L_{i}}} \geq (1 - \beta) d(v)\big\}.
  \]
  That is, $L_0$ consists of all vertices of degree at most $\theta$, and a vertex $v$ is in $L_{i+1}$ if $i$ is the smallest index for which a $(1 - \beta)$-fraction of $v$'s neighbors resides in $L_0 \cup L_1 \cup \cdots \cup L_{i}$. We say that $G$ admits a \dft{$(\theta, \beta)$-layered partition of depth $\mathbf{\ell}$} if we have $V = L_0 \cup L_1 \cup \cdots \cup L_\ell$.
\end{dfn}

\begin{ntn}
  For a fixed $i$, we denote $L_{\leq i} = L_0 \cup L_1 \cup \cdots \cup L_i$, and similarly for $L_{< i}$, $L_{\geq i}$, and $L_{>i}$.  We use the notation $d_i(v)$ to denote $|\Gamma(v) \cap L_i|$ and similarly for $d_{\leq i} (v)$ and $d_{\geq i}(v)$.
\end{ntn}

\begin{lem}
  \label{lem:layered-partition}
  Suppose $G$ is a graph with arboricity at most $\alpha$. Then $G$ admits a $(\theta, \beta)$-layered partition of depth $\ell$ for $\theta = \setthres$, $\beta = \setbeta$, and $\ell \leq \lceil\log n\rceil$.
\end{lem}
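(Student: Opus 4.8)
The plan is to track, for each $i$, the set $S_i := V \setminus (L_0 \cup \cdots \cup L_i)$ of vertices not yet placed in a layer, and to show $|S_i|$ roughly halves with each increment of $i$; since $V = L_0 \cup \cdots \cup L_\ell$ is exactly the statement $S_\ell = \emptyset$, this yields depth $\ell \le \lceil\log n\rceil$ (and along the way it shows every $L_i$ is nonempty, so the layering is well defined). The first step is to rewrite the membership rule in terms of induced subgraphs. For $v \in S_i$ let $d_{S_i}(v) := |\Gamma(v) \cap S_i|$ be its degree in $G[S_i]$. Since $V$ is the disjoint union of $L_{\le i}$ and $S_i$, the defining condition $|\Gamma(v) \cap L_{\le i}| \ge (1-\beta)d(v)$ for $v$ to enter $L_{i+1}$ is equivalent to $d_{S_i}(v) \le \beta d(v)$, and hence
\[
  S_{i+1} = \set{v \in S_i \sucht d_{S_i}(v) > \beta d(v)}\,.
\]
Because every vertex of $S_i$ lies outside $L_0$ it has $d(v) > \theta$, so every vertex of $S_{i+1}$ satisfies $d_{S_i}(v) > \beta\theta$.

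The second step is a counting argument via Nash--Williams. The subgraph $G[S_i]$ has arboricity at most $\alpha$, so by Theorem~\ref{thm:nash-williams} it has fewer than $\alpha|S_i|$ edges, whence $\sum_{v \in S_i} d_{S_i}(v) = 2\,m_{G[S_i]} < 2\alpha|S_i|$. Since each of the $|S_{i+1}|$ vertices of $S_{i+1}$ contributes more than $\beta\theta$ to this sum, $\beta\theta\cdot|S_{i+1}| < 2\alpha|S_i|$, i.e.\ $|S_{i+1}| < \frac{2\alpha}{\beta\theta}|S_i|$. The values $\theta = \setthres$ and $\beta = \setbeta$ are calibrated precisely so that $\beta\theta$ is a large enough multiple of $\alpha$ to make this ratio at most $\tfrac12$, giving $|S_{i+1}| \le |S_i|/2$. (Applying the same observation to the minimum-degree vertex of $G[S_i]$, which has $d_{S_i}(v) \le 2\alpha$ by Corollary~\ref{cor:nash-williams}, shows $L_{i+1}\neq\emptyset$ whenever $S_i\neq\emptyset$, so the construction does not stall prematurely.)

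To finish, I would bound the starting size: by Markov's inequality and $m < \alpha n$ (Theorem~\ref{thm:nash-williams} applied to $H = G$), $|S_0| = |\set{v \in V \sucht d(v) > \theta}| \le 2m/\theta < 2\alpha n/\theta < n$. Combining with the halving bound, $|S_{\lceil\log n\rceil}| \le |S_0|/2^{\lceil\log n\rceil} < n/n = 1$, and since $|S_{\lceil\log n\rceil}|$ is an integer it must be $0$. Thus $V = L_0 \cup \cdots \cup L_{\lceil\log n\rceil}$, i.e.\ $G$ admits a $(\theta,\beta)$-layered partition of depth $\ell \le \lceil\log n\rceil$, as claimed.

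The main obstacle is the parameter trade-off hidden in the middle step: $\beta\theta$ must be a sufficiently large constant multiple of $\alpha$ so that the number of ``stubborn'' vertices (those with many neighbors still uncovered) drops by a constant factor each round, while simultaneously $\theta$ must be kept small, since it enters the eventual query bound $O^*(\alpha/\davg)$ through the rejection-sampling step, and $\beta$ must be kept $\ll \eps/\log n$, since each of the $\Theta(\log n)$ layers incurs a $(1-\beta)$ loss in the downstream approximation guarantee. Everything else — the reformulation of layer membership, the Nash--Williams edge count, and the telescoping of the halving bound — is routine.
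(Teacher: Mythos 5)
Your proof is correct and follows essentially the same route as the paper's: a per-level halving of the uncovered set via Nash--Williams, where each vertex that survives to the next level contributes at least $\beta\theta$ to the edge count of the current induced subgraph, and the parameters are set so $2\alpha/(\beta\theta) \le 1/2$. The paper works with $W_i = V \setminus L_{\le i-1}$ rather than your $S_i = V \setminus L_{\le i}$ (an index shift), and it invokes the degeneracy corollary to argue termination directly rather than via your Markov bound on $|S_0|$ followed by telescoping, but these are cosmetic differences in the same argument.
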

\begin{proof}
  For each $i$, let $W_i = V \setminus (L_0 \cup L_1 \cup \cdots \cup L_{i-1})$ be the set of vertices not in levels $0, 1, \ldots, i-1$. Let $m(W_i)$ denote the number of edges in the subgraph of $G$ induced by $W_i$. For any fixed $i$ and $v \in W_{i+1}$, we have $d_{< i}(v) < (1 - \beta) d(v)$ because $v \notin L_{\leq i}$. Therefore, $v$ has at least $\beta d(v) > \beta \thres$ neighbors in $W_i$. Summing over vertices $v \in W_{i+1}$ gives
  \begin{equation}
    m(W_i) = \frac{1}{2}\sum_{v\in W_i}d_{\geq i}(v) \geq \frac{1}{2}\sum_{v\in W_{i+1}}d_{\geq i}(v) > \frac{1}{2} \abs{W_{i+1}}\cdot \beta\thres\;.
    \label{eq:mWi-lb}
  \end{equation}
  On the other hand, since $G$ has arboricity at most $\alpha$, Theorem~\ref{thm:nash-williams} implies that
\ifnum\icalp=0
  \begin{equation}
    m(W_i) \leq \alpha \abs{W_i}\;.
    \label{eq:mWi-ub}
  \end{equation}
  Combining Equations~(\ref{eq:mWi-lb}) and~(\ref{eq:mWi-ub}), we find that
  \[
\frac{\abs{W_{i+1}}}{\abs{W_i}} \leq \frac{2 \alpha}{\beta \theta} = \frac 1 2\;,
\]

  \else
  $ m(W_i) \leq \alpha \abs{W_i}\;.  \label{eq:mWi-ub}$
  Plugging this into Equations~(\ref{eq:mWi-lb}), we find that
  $\frac{\abs{W_{i+1}}}{\abs{W_i}} \leq \frac{2 \alpha}{\beta \theta} = \frac 1 2\;,$
  \fi
  where the inequality is by the choice of $\beta$ and $\theta$. Therefore, $\ell \leq \setell$, as required.

  To see that $V = L_0 \cup L_1 \cup \cdots \cup L_\ell$ (i.e., $W_{\ell+1} = \varnothing$) suppose to the contrary that there exists $v \in W_{\ell+1}$. Then every $v \in W_{\ell+1}$ has at least $\beta d(v) \geq \beta \theta = 4 \alpha$ neighbors in $W_\ell$. By Corollary~\ref{cor:nash-williams}, this implies that $\alpha(G) \geq 2 \alpha$, which is a contradiction.
\end{proof}

\subsection{Algorithm description}

The algorithm exploits the structure of graphs $G$ with arboricity at most $\alpha$ described in Lemma~\ref{lem:layered-partition}. More precisely, as the algorithm does not have direct access to this structure, 
the structure is used explicitly only in the analysis of the algorithm.
Let $L_0, L_1, \ldots, L_\ell$ be a $(\theta, \beta)$-layered partition of $V$ with $\theta = 4 \alpha \lceil\log n \rceil/ \e$, $\beta = \e / \log n$, and $\ell = \log n$. Vertices $v \in L_0$ are sampled with probability exactly proportional to their degree using a simple rejection sampling procedure, $\sal(G, \thres)$. In order to sample vertices in layers $L_i$ for $i > 0$, our algorithm performs a random walk starting from a random vertex in $L_0$ chosen with probability proportional to its degree. Specifically, the algorithm $\se(G, \alpha)$ chooses a length $j$ to the random walk uniformly  in $[0,\ell]$ . The subroutine $\rw(G, \theta, j)$ performs the random walk for $j$ steps, or until a vertex $v \in L_0$ is reached in some step $i > 0$. If the walk returns to $L_0$, the subroutine aborts and does not return any vertex. (This behavior ensures that samples are not too biased towards vertices in lower layers.) Otherwise, $\rw$ returns the vertex at which the random walk halts. Our analysis shows that the probability that the random walk terminates at any vertex $v \in V$ is approximately proportional to $d(v)$ (Corollary~\ref{cor:se}), although $\se$ may fail to return any edge with significant probability. Finally, we repeat $\se$ until it successfully returns a vertex.

\begin{figure}[htb!] \label{se}
  \fbox{
    \begin{minipage}{0.95\textwidth}
      $\se(G, \alpha, \e)$
      \smallskip
      \begin{enumerate}
      \item Let $\theta=\setthres$ and let $\ell=\setell$. \label{step:set}
      \item Choose a number $j \in [0,\ell]$ uniformly at random.
      \item Invoke $\rw(G,\theta,j)$ and let $v$ be the returned vertex if one was returned. Otherwise, \textbf{return} \FAIL. \label{step:invoke_rw}
      \item Sample a uniform neighbor $w$ of $v$ and \textbf{return} $e=(v,w)$.
      \end{enumerate}
    \end{minipage}
  }
\end{figure}

\begin{figure}[htb!] \label{rw}
  \fbox{
    \begin{minipage}{0.95\textwidth}
      $\rw(G, \theta, j)$
      \smallskip
      \begin{enumerate}
      \item Invoke \sal$(\theta)$ and let $v_0$ be the returned vertex if one was returned. Otherwise, \textbf{return} \FAIL.
      \item For $i=1$ to $j$ do
	\begin{enumerate}
	\item Sample a random neighbor $v_i$ of $v_{i-1}$.
	\item If $v_i \in L_0$ then \textbf{return} \FAIL. \label{step:L0_fail}
	\end{enumerate}	
      \item \textbf{Return} $v_j$.	
      \end{enumerate}
    \end{minipage}
  }
\end{figure}

\begin{figure}[htb!] \label{sal}
  \fbox{
    \begin{minipage}{0.95\textwidth}
      $\sal(G, \thres)$
      \smallskip
      \begin{enumerate}
      \item Sample a vertex $u \in V$ uniformly at random and query for its degree. \label{step:l1}
      \item If $d(u) > \thr$ \textbf{return} \FAIL. \label{step:fail_not_L0}
      \item \textbf{Return} $u$ with probability $\frac{d(u)}{\thres}$, and with probability  $1-\frac{d(u)}{\thres}$ \textbf{return} \FAIL.
      \end{enumerate}
    \end{minipage}
  }
\end{figure}


\begin{dfn} \label{def:P}
We let $\HP_j[v]$ denote the probability that \rw\ returns $v$, when invoked with parameters
$G$, $\theta$ and $j\in [0,\ell]$.
We also let $\HP_{\leq j}[v] \eqdef \sum_{i=0}^{j} \HP_i[v]$  and similarly for $\HP_{\geq j}$.	
\end{dfn}

\begin{lem}\label{lem:ub}
  Let $\ell$ be as set in Step~\ref{step:set} of \se\ and let $\HP_{\leq j}$ be as defined in Definition~\ref{def:P}. For all $v\in V$, $\HP_{\leq \ell}[v] \leq \frac{d(v)}{n \thres}$.
\end{lem}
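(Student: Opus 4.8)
The plan is to prove the inequality $\HP_{\leq \ell}[v] \leq \frac{d(v)}{n\thres}$ by induction on the layer index $i$ such that $v \in L_i$, showing the stronger-looking but more natural statement that the total probability mass flowing into $v$ across all walk lengths is bounded by the mass that would flow in if every neighbor contributed its full ``fair share.'' Concretely, I would first unwind the definitions: a call to $\rw(G,\theta,j)$ starts by invoking $\sal(\theta)$, which returns a uniformly random vertex $u_0$ with $d(u_0) \le \theta$, accepted with probability $d(u_0)/\thres$; hence $u_0$ is handed to the walk with probability exactly $\frac{1}{n}\cdot\frac{d(u_0)}{\thres} = \frac{d(u_0)}{n\thres}$ for each $u_0 \in L_0$ (and $0$ otherwise). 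For $j = 0$ the walk returns $u_0$ itself, so $\HP_0[v] = \frac{d(v)}{n\thres}\cdot \mathbf{1}[v \in L_0]$. For $j \ge 1$, the walk takes a step to a uniformly random neighbor at each stage and aborts (returning nothing) if it ever lands back in $L_0$ after step $0$.

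The key recursive identity I would establish is that for $j \ge 1$ and any $v \notin L_0$,
\[
\HP_j[v] = \sum_{u \in \Gamma(v) \setminus L_0} \HP_{j-1}[u] \cdot \frac{1}{d(u)}\,,
\]
and moreover for $v \notin L_0$, $\HP_0[v] = 0$. (For $v \in L_0$ and $j \ge 1$ we have $\HP_j[v] = 0$ by the abort rule in Step~\ref{step:L0_fail}.) Summing this over $j \in [0,\ell]$, and writing $P(u) := \HP_{\leq \ell}[u]$ for brevity, I get for $v \notin L_0$ that
\[
P(v) = \sum_{j=1}^{\ell} \sum_{u \in \Gamma(v)\setminus L_0} \HP_{j-1}[u]\cdot \frac{1}{d(u)}
\;\le\; \sum_{u \in \Gamma(v)\setminus L_0} \frac{P(u)}{d(u)}\,,
\]
since $\sum_{j=1}^{\ell}\HP_{j-1}[u] = \HP_{\leq \ell-1}[u] \le \HP_{\leq\ell}[u] = P(u)$. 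The induction is then on the layer: the base case is $v \in L_0$, where $P(v) = \HP_0[v] = \frac{d(v)}{n\thres}$ exactly (all higher terms vanish). For the inductive step, suppose $v \in L_i$ with $i \ge 1$ and that $P(u) \le \frac{d(u)}{n\thres}$ for all $u \in L_{<i}$. Here I would use the crucial structural fact — exactly the property highlighted in the algorithm discussion — that a vertex $v \in L_i$ has at most $\beta d(v)$ neighbors in $L_{\geq i}$, equivalently at least $(1-\beta)d(v)$ neighbors in $L_{<i} = L_0 \cup \cdots \cup L_{i-1}$. However, to close the induction cleanly with the bound $P(v) \le d(v)/(n\thres)$ I actually need to handle the $L_{\geq i}$ neighbors too, so I anticipate the induction should be carried out layer-by-layer with a simultaneous/strong hypothesis, or — cleaner — by bounding $P(u)/d(u) \le 1/(n\thres)$ uniformly: indeed once we know $P(u) \le d(u)/(n\thres)$ for all $u$ in layers $< i$, and noting $\Gamma(v)\setminus L_0$ splits into neighbors in $L_{<i}$ (covered by hypothesis) and neighbors in $L_{\geq i}$ (not yet covered), the recursion $P(v) \le \sum_{u \in \Gamma(v)\setminus L_0} P(u)/d(u)$ is not immediately self-contained.

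This last point is where I expect the main obstacle, and I would resolve it by strengthening the induction to prove, simultaneously for all vertices, the bound $P(v) \le d(v)/(n\thres)$ via induction on $i$ where the inductive claim for layer $i$ is: \emph{for every $v \in L_{\leq i}$, $P(v) \le d(v)/(n\thres)$}. To make the step work I need that when $v \in L_i$, the walk-contribution from a neighbor $u \in L_{\geq i}$ is accounted for — but observe that any walk contributing to $\HP_j[v]$ through a step from such a $u$ must itself have reached $u$, and reaching $u \in L_{\geq i}$ at step $j-1$ means the walk reached $u$ via a sub-walk of length $j-1$; unrolling fully, every walk ending at $v$ that stays out of $L_0$ traces back to a single starting vertex $u_0 \in L_0$. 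So an alternative, and I think cleaner, route is to bound $\HP_{\leq\ell}[v]$ directly by summing over all non-$L_0$-avoiding paths $u_0, u_1, \dots, u_j = v$ from $L_0$: the probability of such a path is $\frac{d(u_0)}{n\thres}\prod_{t=1}^{j}\frac{1}{d(u_{t-1})} = \frac{1}{n\thres}\prod_{t=1}^{j}\frac{d(u_t)}{d(u_{t-1})}\cdot\frac{1}{d(u_j)}\cdot d(u_j)$... — more simply, it telescopes to $\frac{1}{n\thres}\cdot\frac{d(u_0)}{d(u_0)d(u_1)\cdots d(u_{j-1})}$, and summing over the at most $\prod d(u_t)$ branchings one shows the total is at most $\frac{d(v)}{n\thres}$ provided each vertex is reached from ``enough'' directions — which is precisely guaranteed by the layering property that at least $(1-\beta)$ of $v$'s neighbors are in strictly lower layers. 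I would present the induction-on-layers version as the primary argument: assume the bound for all vertices in $L_{<i}$, write $P(v) \le \sum_{u \in \Gamma(v)\cap L_{<i}} P(u)/d(u) + \sum_{u \in \Gamma(v)\cap L_{\geq i}} P(u)/d(u)$, and handle the second sum by a secondary induction on the ``sub-depth'' within $L_{\geq i}$ reachable without passing through $L_0$; the key calculation is that $\sum_u P(u)/d(u) \le \frac{1}{n\thres}\sum_{u\in\Gamma(v)} 1 = \frac{d(v)}{n\thres}$ once the uniform bound $P(u) \le d(u)/(n\thres)$ is available for all relevant $u$, which the combined induction delivers.
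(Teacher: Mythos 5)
Your recursion is off for $j=1$: the only nonzero contributions to $\HP_1[v]$ come from $u \in \Gamma(v) \cap L_0$ (since $\HP_0[u] = 0$ for $u \notin L_0$), so restricting the sum to $\Gamma(v)\setminus L_0$ wrongly gives $\HP_1[v]=0$. The correct identity for $v\notin L_0$ and $j\geq 1$ is $\HP_j[v] = \sum_{u\in\Gamma(v)}\HP_{j-1}[u]/d(u)$; the abort rule makes $\HP_{j-1}[u] = 0$ for $u\in L_0$ automatically once $j\geq 2$, and for $j=1$ the $L_0$ neighbors are exactly the ones that matter. This propagates: the valid consequence is $\HP_{\leq\ell}[v] \leq \sum_{u\in\Gamma(v)}\HP_{\leq\ell}[u]/d(u)$ (with $\Gamma(v)$, not $\Gamma(v)\setminus L_0$).

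The genuine gap, however, is the one you flag yourself but do not resolve. An induction on the layer index $i$ cannot close here, because the recursion references $\HP_{\leq\ell}[u]$ for neighbors $u$ in \emph{arbitrary} layers, including $L_{\geq i}$; your ``simultaneous induction'' and ``secondary induction on sub-depth'' are not actually specified, and your closing claim (``once the uniform bound $P(u)\leq d(u)/(n\thres)$ is available for all relevant $u$, which the combined induction delivers'') is circular, since you are assuming exactly what you are trying to prove. The path-summing aside is also not carried through to a conclusion. Note that the higher layers $L_1,\dots,L_\ell$ in fact play no role in this upper-bound lemma -- it holds for any graph, not only graphs of bounded arboricity -- which is a strong hint that inducting on layers is the wrong variable. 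The fix is to induct on the truncation parameter $j$: prove $\HP_{\leq j}[v]\leq d(v)/(n\thres)$ for all $v\in V$ by induction on $j\in[0,\ell]$. The base case $j=0$ is immediate from the description of $\sal$, the case $v\in L_0$ is handled once and for all because $\HP_i[v]=0$ for $i>0$, and for $v\notin L_0$ the inductive step is the one-line telescoping calculation $\HP_{\leq j}[v] = \sum_{u\in\Gamma(v)}\HP_{\leq j-1}[u]/d(u) \leq \sum_{u\in\Gamma(v)}\frac{1}{n\thres} = \frac{d(v)}{n\thres}$. Inducting on $j$ makes the recursion self-contained: $\HP_{\leq j-1}[u]$ is covered by the hypothesis for \emph{every} neighbor $u$, regardless of which layer $u$ lives in, which is exactly what the layer-index induction cannot guarantee.
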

\begin{proof}
  We argue by induction on $j$ that for any  $j \in [0,\ell]$, $\HP_{\leq j}[v] \leq d(v) / n \theta$. For the case $j = 0$, it is immediate from the description of \rw\ and \sal\ that $\HP_0[v] = d(v) / (n \theta)$ if $v \in L_0$ and $\HP_0[v] = 0$ otherwise.
  Further, for $v \in L_0$, due to Step~\ref{step:L0_fail}, $\HP_i(v) = 0$ for all $i > 0$, so that the lemma holds for all $v \in L_0$. Now suppose that for all $v\in V$ we have $\HP_{\leq j-1}(v) \leq d(v) / n \thres$. Then for any fixed $v \notin L_0$ we compute
  \begin{align*}
    \HP_{\leq j}[v] &= \sum_{i = 1}^j \HP_i[v]
    \;=\; \sum_{i = 1}^j \sum_{u \in \Gamma(v)} \HP_{i-1}[u] \frac{1}{d(u)} 
    \;=\; \sum_{u \in \Gamma(v)} \frac{1}{d(u)} \sum_{i = 0}^{j-1} \HP_i[u]\\ 
    &=\; \sum_{u \in \Gamma(v)} \frac{1}{d(u)} \HP_{\leq j-1}[u] 
    \;\leq\; \sum_{u \in \Gamma(v)} \frac{1}{d(u)} \frac{d(u)}{n \thres} 
    \;=\; \frac{d(v)}{n \thres}.
  \end{align*}
  The second equality holds by the definition of \rw, and the one before the last inequality holds by the inductive hypothesis.
\end{proof}



 \begin{lem}\label{lem:lb}
Let $\ell$ be as set in Step~\ref{step:set} of \se.
   For every $j \in [\ell]$, $v\in L_j$ and $k \in [j, \ell]$, we have $\HP_{\leq k}[v] \geq \frac{(1-\beta)^j d(v)}{n\theta}$.
\end{lem}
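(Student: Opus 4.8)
\textbf{Plan for the proof of Lemma~\ref{lem:lb}.}
The plan is to argue by induction on $j$, the level index, mirroring the structure of the proof of Lemma~\ref{lem:ub} but tracking a lower bound instead of an upper bound. The base case $j = 0$ (where $v \in L_0$) is immediate: $\HP_0[v] = d(v)/(n\theta)$ exactly, so $\HP_{\leq k}[v] \geq d(v)/(n\theta) = (1-\beta)^0 d(v)/(n\theta)$ for every $k \geq 0$. For the inductive step, fix $j \geq 1$ and $v \in L_j$. By definition of the layering, at least $(1-\beta)d(v)$ of $v$'s neighbors lie in $L_{\leq j-1}$; write $u$ for a generic such neighbor, and let $j(u) \leq j-1$ be the level of $u$. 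The key identity, exactly as in Lemma~\ref{lem:ub}, is the one-step recursion $\HP_i[v] = \sum_{u \in \Gamma(v)} \HP_{i-1}[u]/d(u)$ coming from the definition of \rw, which summed over $i$ gives $\HP_{\leq k}[v] = \sum_{u \in \Gamma(v)} \HP_{\leq k-1}[u]/d(u)$.

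Now I would restrict the sum to the neighbors $u \in \Gamma(v) \cap L_{\leq j-1}$ (dropping the nonnegative contributions of the other neighbors), and apply the inductive hypothesis to each such $u$: since $j(u) \leq j - 1 \leq k-1$ and $k - 1 \in [j(u), \ell]$ (here one uses $k \geq j \geq j(u)+1$ and $k \leq \ell$, so $k-1 \leq \ell$), the hypothesis yields $\HP_{\leq k-1}[u] \geq (1-\beta)^{j(u)} d(u)/(n\theta) \geq (1-\beta)^{j-1} d(u)/(n\theta)$, where the last step uses $j(u) \leq j-1$ and $(1-\beta) < 1$. Substituting,
\[
\HP_{\leq k}[v] \;\geq\; \sum_{u \in \Gamma(v) \cap L_{\leq j-1}} \frac{1}{d(u)} \cdot \frac{(1-\beta)^{j-1} d(u)}{n\theta} \;=\; \frac{(1-\beta)^{j-1}}{n\theta} \cdot \abs{\Gamma(v) \cap L_{\leq j-1}} \;\geq\; \frac{(1-\beta)^{j-1}}{n\theta} \cdot (1-\beta) d(v) \;=\; \frac{(1-\beta)^j d(v)}{n\theta},
\]
which completes the induction.

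\textbf{Main obstacle.} The routine calculation above is not where the subtlety lies; the delicate point is getting the index bookkeeping right so that the inductive hypothesis is actually applicable to each neighbor $u$. Specifically, one must check that for $u \in L_{j(u)}$ with $j(u) \leq j-1$, the quantity $k-1$ genuinely lies in the range $[j(u), \ell]$ required by the statement for the smaller index — this needs $k \geq j$ (given) together with $k \leq \ell$ (given), and the fact that $\rw$ with parameter $k-1 \geq j(u)$ is a valid invocation. A second point to handle cleanly is that $\HP_i[v] = 0$ for $i < j$ when $v \in L_j$ is \emph{not} needed for the lower bound (we only need a lower bound, and the terms we keep are all for $i$ large enough), so the argument only requires the recursion and monotonicity of $(1-\beta)^{(\cdot)}$, not any vanishing claim. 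Care must also be taken that dropping neighbors outside $L_{\leq j-1}$ is legitimate, which holds because every $\HP_{\leq k-1}[u]$ is a probability and hence nonnegative.
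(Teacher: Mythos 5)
Your proof is correct and follows essentially the same inductive approach as the paper (induction on the layer index, one-step random-walk recursion, restriction to neighbors in lower layers, and the $(1-\beta)$-fraction property of the layering). The only cosmetic difference is that the paper first drops from $\HP_{\leq k}[v]$ to $\HP_{\leq j}[v]$ by nonnegativity and applies the inductive hypothesis at the fixed truncation index $j-1$, whereas you unroll one step of $\HP_{\leq k}[v]$ directly and apply it at $k-1$; both index bookkeepings are valid, and the paper's separately-written $j=1$ case is subsumed by your unified step.
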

\begin{proof}
  We prove the claim by induction on $j$. For $j=0$ and $k=0$, by the description of \rw\ and \sal, for every $v\in L_0$,
  \begin{equation}
    \HP_0[v]=\frac{d(v)}{n \thres}\;.\label{eqn:P0}
  \end{equation}
  For $j=0$ and $0<k \leq \ell$,
  \begin{equation}
  \HP_{\leq k}[v]=\sum_{i=0}^{k} \HP_{i}[v] = \HP_0[v]+ \sum_{i=1}^{k}\HP_{i}[v]=\frac{d(v)}{n\theta}\;, \label{eqn:P}
  \end{equation}
  where the last equality is due to Step~\ref{step:L0_fail} in $\rw$.

  For $j=1$ and $1 \leq k \leq \ell$, for every $v \in L_1$, according to Step~\ref{step:L0_fail} in the procedure \rw, $\HP_0[v] = 0$. Also, for every $u \notin L_0$, $\HP_0[u]=0$, since by Step~\ref{step:fail_not_L0} in \sal\ it always holds that $v_0$  is in $L_0$. Therefore,
  \begin{equation}
    \HP_{1}[v]\;=\;\sum_{u \in \Gamma(v) } \HP_0[u] \cdot \frac{1}{d(u)} \;= \;\sum_{u \in \Gamma(v)\cap L_0} \HP_0[u] \cdot \frac{1}{d(u)}\;=\;\sum_{u \in \Gamma(v)\cap L_0} \frac{d(u)}{n\theta} \cdot \frac{1}{d(u)}=\frac{d_0(v)}{n\thres} \label{eqn:pleqj},
  \end{equation}
where the second to last inequality is by Equation~\eqref{eqn:P0}. By the definition of $L_1$, for every $v\in L_1$, $d_0(v)\geq (1-\beta)d(v)$, and it follows that
  \(
  \HP_{\leq k}[v] \geq \HP_{1}[v]\geq (1-\beta)d(v)/ (n\thres)
  \).

  We now assume that the claim holds for all $i \leq j-1$ and $k \in [i, \ell]$, and prove that it holds for $j$ and for every $k \in [j, \ell]$.  By the induction hypothesis and the definition of $L_j$, for any $v \in L_j$ we have
  \begin{align*}
    \HP_{\leq k}[v]&\geq \HP_{\leq j}[v]
    \;=\;\sum_{u \in \Gamma(v)}P_{\leq j-1}[u]\cdot \frac{1}{d(u)}
    \;\geq\; \sum_{i=0}^{j-1}\sum_{u \in \Gamma(v)\cap L_i}\HP_{\leq j-1}[u]\cdot \frac{1}{d(u)}\\
    &\;\geq \; \sum_{i=0}^{j-1}\sum_{u \in \Gamma(v)\cap L_i} \frac{(1-\beta)^{i}d(u)}{n\thres}\cdot \frac{1}{d(u)}
    \geq\;\frac{(1-\beta)^{j-1} d_{\leq j-1}(v)}{n\thres}\geq \frac{(1-\beta)^j d(v)}{n\thres}.
  \end{align*}
  Hence, the claim holds for every $j \in [\ell]$ for every $k \in [j, \ell]$.
\end{proof}

\begin{cor}\label{cor:se}
  For any graph $G$ with arboricity at most $\alpha$,
   the procedure \se\, when invoked with $G$, $\alpha$ and $\e$, returns each edge in the graph with probability in the range $\left[\frac{1-\e/2}{\rho}, \frac{1}{\rho}\right]$ for $\rho=n\thres (\ell +1)$, $\theta = \setthres$ and $\ell=\setell$.
\end{cor}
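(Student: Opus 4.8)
The plan is to reduce the claim about directed edges to the bounds on $\HP_{\le\ell}$ already established in Lemmas~\ref{lem:ub} and~\ref{lem:lb}. First I would observe that \se\ returns a fixed directed edge $e=(v,w)$ precisely when it draws some $j\in[0,\ell]$ (probability $\frac1{\ell+1}$ for each value), the call $\rw(G,\theta,j)$ returns $v$ (probability $\HP_j[v]$ by Definition~\ref{def:P}), and the uniform neighbor of $v$ chosen in the last step happens to be $w$ (probability $\frac1{d(v)}$). Summing over $j$,
\[
\Pr\!\left[\se\text{ returns }(v,w)\right]\;=\;\frac{1}{\ell+1}\sum_{j=0}^{\ell}\HP_j[v]\cdot\frac{1}{d(v)}\;=\;\frac{\HP_{\le\ell}[v]}{(\ell+1)\,d(v)}\,.
\]
Crucially this value is the same for all $d(v)$ directed edges leaving $v$, so it suffices to sandwich this single quantity between $\tfrac{1-\eps/2}{\rho}$ and $\tfrac1\rho$ for $\rho=n\theta(\ell+1)$.

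For the upper bound I would invoke Lemma~\ref{lem:ub}, giving $\HP_{\le\ell}[v]\le d(v)/(n\theta)$, so the probability above is at most $\frac{1}{(\ell+1)n\theta}=\frac1\rho$. For the lower bound I would first apply Lemma~\ref{lem:layered-partition} with $\theta=\setthres$ and $\beta=\setbeta$ to obtain a $(\theta,\beta)$-layered partition $L_0,\dots,L_{\ell'}$ of $V$ with $\ell'\le\lceil\log n\rceil=\ell$; in particular every $v\in V$ lies in some $L_j$ with $j\le\ell$. If $j\ge1$, Lemma~\ref{lem:lb} with $k=\ell$ gives $\HP_{\le\ell}[v]\ge(1-\beta)^j d(v)/(n\theta)$, and if $j=0$ the same bound holds (with equality), since $\HP_{\le\ell}[v]=\HP_0[v]=d(v)/(n\theta)$ for $v\in L_0$ because \rw\ aborts the instant it re-enters $L_0$ (Step~\ref{step:L0_fail}). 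Using $j\le\ell$, Bernoulli's inequality, and the choice $\beta=\setbeta$, $\ell=\setell$, so that $\beta\ell=\eps/2$,
\[
(1-\beta)^j\;\ge\;(1-\beta)^{\ell}\;\ge\;1-\beta\ell\;=\;1-\eps/2\,.
\]
Plugging this back yields $\Pr[\se\text{ returns }(v,w)]\ge\frac{(1-\eps/2)\,d(v)/(n\theta)}{(\ell+1)\,d(v)}=\frac{1-\eps/2}{\rho}$, as required.

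The whole argument is essentially bookkeeping once Lemmas~\ref{lem:layered-partition}, \ref{lem:ub}, and~\ref{lem:lb} are available; the only points that need care are (i) justifying that the per-edge probability factors cleanly as $(\Pr[\rw\text{ returns }v])\cdot\frac1{d(v)}$ and that the normalizing constant $\rho$ does not depend on the edge, and (ii) the boundary layer $L_0$, where Lemma~\ref{lem:lb} is stated only for $j\ge1$ but where the bound is attained trivially. I do not anticipate a genuine obstacle here: the substance lies in the layered-partition lemma and the two inductive estimates for $\HP_{\le\ell}$, all of which are already proved above.
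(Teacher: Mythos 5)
Your proof is correct and follows essentially the same route as the paper: factor the per-edge probability as $\frac{1}{\ell+1}\HP_{\le\ell}[v]\cdot\frac{1}{d(v)}$, then sandwich $\HP_{\le\ell}[v]$ using Lemmas~\ref{lem:ub} and~\ref{lem:lb}. You are slightly more explicit than the paper about the $L_0$ boundary case and about the Bernoulli step $(1-\beta)^{\ell}\ge 1-\beta\ell=1-\eps/2$, but these are just details the paper compresses, not a different argument.
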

\begin{proof}
  Consider a specific edge $e^*=(v^*,w^*)$, and let $i$ be the index such that $v^* \in L_i$.
By the description of the procedure \se, the procedure \rw\ is invoked with an index $j$ that is
chosen uniformly in $[0,\ell]$. Hence, the probability that $v^*$ is returned by \rw\ in Step~\ref{step:invoke_rw} is
\[
\Pr[v=v^*]= \frac{1}{\ell+1}\sum_{j=0}^{\ell} P_{j}[v] = \frac{1}{\ell+1} \HP_{\leq \ell}[v].
\]
By Lemma~\ref{lem:ub}, 
$\HP_{\leq \ell}[v] \leq \frac{d(v)}{n \thres }$, and by Lemma~\ref{lem:lb}, 
$\HP_{\leq \ell}[v^*] \geq \frac{(1-\beta)^{\ell}d(v^*)}{n \thres }$, where the probability is over the random coins of the procedures \se\ and \rw. Hence,
\[
\Pr[v=v^*] \in [(1-\beta)^{\ell},1]\cdot \frac{d(v^*)}{n \thres (\ell + 1)},
\]
implying that for $\rho=n\theta(\ell+1)$,
\begin{equation}
\Pr[(v^*, w^*)\text{ is the returned edge}] \in  [(1 - \beta)^{\ell},1] \cdot \frac{1}{n \thres (\ell + 1)}
\in \left[\frac{1-\eps/2}{\rho},\frac{1}{\rho}\right],
\end{equation}
where the last inequality is by the setting of $\beta=\setbeta$.
\end{proof}


\begin{proof}[Proof of Theorem~\ref{thm:upper-bound}]
  Consider the algorithm that repeatedly calls $\se(G, \alpha)$ until an edge $e$ is successfully returned. For a single invocation of $\se$ and fixed edge $e$ let $A_e$ denote the event that $\se$ returns $e$. By Corollary~\ref{cor:se} we have that  $\Pr[A_e] \geq (1 - \e) / n \theta (\ell + 1)$. Further, for any edge $e' \neq e$ the events $A_e$ and $A_{e'}$ are disjoint, so we bound
   \[
    \Pr[\se \text{ returns an edge}] \;=\; \Pr\sqb{\bigcup_{e \in E} A_e}
    \;= \; \sum_{e \in E} \Pr[A_e]
    \;\geq\; \frac{(1 - \e) m}{n \theta (\ell+1)}\;.
\]
  The expected number of iterations until $\se$ succeeds is the reciprocal of this probability, so
  \[
  \E[\# \text{ invocations until success}] \leq \frac{n \theta (\ell+1)}{(1 - \e) m} = O\paren{\frac{n \alpha}{m \e} \cdot \log^2 n}.
  \]
  Since each invocation of $\se$ uses $O(\log n)$ queries, the expected number of queries before an edge is returned is $O(\frac{n\alpha}{\eps m}\cdot \log^3 n)$.

  Finally, when conditioned on a successful invocation of $\se$, Corollary~\ref{cor:se} implies that for any $e, f \in E$ the probabilities $p_e, p_f$ of returning $e$ and $f$, respectively, satisfy
  \[
  1 - \e/2 \leq \frac{p_e}{p_f} \leq \frac{1}{1 - \e/2} \leq 1 +  \e.
  \]
  Therefore, the induced distribution $P$ over edges returned by a successful invocation of $\se$ is pointwise $\e$ close to uniform, which gives the desired result. 
\end{proof}



\ifnum\icalp=0
\section{General Lower Bound}
\label{sec:general-lower-bound}
In this section, we prove Theorem~\ref{thm:general-lb}, thereby showing that the upper bound implied by the algorithm in Section~\ref{sec:upper-bound} is tight for all values of $\alpha$, up to a poly-logarithmic factor in $n$. The lower bound is robust, as it applies to the easier problem of sampling edges from a distribution that is almost uniform with respect to TVD.


In order to prove a general query lower bound for edge sampling (Theorem~\ref{thm:general-lb}), we employ the method of~\cite{LB-CC:ER18} based on communication complexity. The proof is a natural generalization of Theorem~C.3 in the full version~\cite{LB-CC:ER18:full}. The idea of the proof is to construct an embedding of the two-party disjointness function into $\calG_n^\alpha$ in such a way that (1) any algorithm that samples edges from an almost-uniform distribution reveals the value of the disjointness function with sufficiently large probability, and (2) every allowable query can be simulated in the two-party communication setting using little communication. The following definitions and theorem are taken directly from~\cite{LB-CC:ER18}.

\begin{dfn}
  \label{dfn:embedding}
  Let $\pset \subseteq \set{0, 1}^N \times \set{0, 1}^N$. Suppose $f : \pset \to \set{0, 1}$ is an arbitrary (partial) function, and let $g$ be a Boolean function on $\calG_n^\alpha$. Let $\calE : \set{0, 1}^N \times \set{0, 1}^N \to \calG_n^\alpha$. We call the pair $(\calE, g)$ an \dft{embedding} of $f$ if for all $(x, y) \in \pset$ we have $f(x, y) = g(\calE(x, y))$.
\end{dfn}


\begin{dfn}
  \label{dfn:query-cost}
  Let $q : \calG_n^\alpha \to \set{0, 1}^*$ be a query and $(\calE, g)$ an embedding of $f$. We say that $q$ has \dft{communication cost} at most $B$ and write $\cost_{\calE}(q) \leq B$ if there exists a (zero-error) communication protocol $\Pi_q$ such that for all $(x, y) \in P$ we have $\Pi_q(x, y) = q(\calE(x, y))$ and $\abs{\Pi_q(x, y)} \leq B$.
\end{dfn}

\begin{thm}
  \label{thm:general-query-lb}
  Let $\calQ$ be a set of allowable queries, $f : \pset \to \set{0, 1}$, and $(\calE, g)$ an embedding of $f$. Suppose that each query $q \in \calQ$ has communication cost $\cost_{\calE}(q) \leq B$. Suppose $\mA$ is an algorithm that computes $g$ using $T$ queries (in expectation) from $Q$. Then the expected query complexity of $\mA$ is $T = \Omega(R(f) / B)$.
\end{thm}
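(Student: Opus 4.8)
The plan is to prove Theorem~\ref{thm:general-query-lb} by a standard simulation/reduction argument from the communication complexity of $f$. First I would set up the communication scenario: Alice holds $x \in \set{0,1}^N$, Bob holds $y \in \set{0,1}^N$, and they wish to compute $f(x,y)$ with error probability bounded by $1/3$ (or whatever the constant underlying $R(f)$ is). Since $(\calE, g)$ is an embedding of $f$, the players conceptually have in mind the graph $\calE(x,y) \in \calG_n^\alpha$, and $f(x,y) = g(\calE(x,y))$. The players will jointly run the algorithm $\mA$ on this (virtual) graph, using public and/or private randomness to sample $\mA$'s coins, and simulate each query $\mA$ makes.

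The key step is the query simulation. Whenever $\mA$ issues a query $q \in \calQ$ to its input graph, the two players invoke the associated communication protocol $\Pi_q$ guaranteed by the hypothesis $\cost_{\calE}(q) \le B$. By definition $\Pi_q(x,y) = q(\calE(x,y))$ and the transcript has length at most $B$, so both players learn the correct answer $q(\calE(x,y))$ at a cost of at most $B$ bits of communication. Feeding this answer back to the simulated execution of $\mA$, the players can continue the simulation. When $\mA$ halts and outputs a value for $g$, that value equals $g(\calE(x,y)) = f(x,y)$ with probability at least $1 - 1/3$ over $\mA$'s coins, so whichever player is designated announces it, using $O(1)$ additional bits. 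This yields a (bounded-error, randomized) communication protocol computing $f$.

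Now I would account for the cost. If $\mA$ makes $T$ queries in expectation (over its own randomness, for the worst-case graph), then the protocol's expected communication is at most $B \cdot T + O(1)$. A subtlety here is that $R(f)$ is typically defined as worst-case (not expected) communication; I would handle this by a truncation/Markov argument: run $\mA$ but abort the simulation if it exceeds, say, $3T$ queries, which by Markov's inequality happens with probability at most $1/3$, contributing an additional $1/3$ to the error probability (and one then works with a version of $R(f)$ tolerating this larger constant error, or first boosts $\mA$'s success probability by repetition so the total error is still below the threshold for $R(f)$). This gives a worst-case protocol of cost $O(BT)$, hence $R(f) = O(BT)$, i.e. $T = \Omega(R(f)/B)$ as claimed.

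The main obstacle, and the only place requiring care, is making the simulation genuinely distributed: the players must agree on $\mA$'s random coins without extra communication (handled via public randomness, which only affects the public-coin version of $R(f)$ — fine up to an additive $O(\log N)$ by Newman's theorem if a private-coin bound is wanted) and must ensure that neither player needs to know the full graph $\calE(x,y)$ — only the \emph{answers} to the queries $\mA$ actually asks, which is exactly what Definition~\ref{dfn:query-cost} provides. I expect the write-up to be short: essentially, "given an embedding with cheap query simulation, any query algorithm for $g$ becomes a communication protocol for $f$," with the expected-vs-worst-case reconciliation being the one technical wrinkle.
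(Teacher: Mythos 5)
The paper does not actually prove Theorem~\ref{thm:general-query-lb}; it states that the theorem (together with Definitions~\ref{dfn:embedding} and~\ref{dfn:query-cost}) is ``taken directly from~\cite{LB-CC:ER18},'' so there is no in-paper proof to compare against. That said, your simulation argument is exactly the standard proof of this kind of reduction-based lower bound and is the argument one would find in~\cite{LB-CC:ER18}: Alice and Bob run $\mA$ on the virtual graph $\calE(x,y)$, replace each query $q$ by the zero-error protocol $\Pi_q$ at cost at most $B$ bits, and output $\mA$'s answer, yielding a protocol for $f$ of expected cost $O(BT)$. Your handling of the two standard wrinkles is also right: public coins (plus Newman if a private-coin bound is wanted) to synchronize $\mA$'s randomness, and a Markov truncation to convert expected communication to worst-case --- with the caveat you correctly flag that one must first boost $\mA$'s success probability (or use that $R_\e(f) = \Theta(R_{1/3}(f))$ for any constant $\e < 1/2$) so that the truncation error plus $\mA$'s error stays below the threshold defining $R(f)$. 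In short, the proposal is correct and is the intended proof.
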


We prove Theorem~\ref{thm:general-lb} by applying Theorem~\ref{thm:general-query-lb}, and taking $f$ to be the \dft{disjointness function} defined by $\disj(x, y) = 1$ if $\sum_{i = 1}^N x_i y_i = 0$ and $\disj(x, y) = 0$ otherwise,
where $x, y \in \set{0, 1}^N$. The following fundamental result gives a lower bound on the communication complexity of $\disj$.

\begin{thm}[\cite{Kalyanasundaram1992, Razborov1992}]
  \label{thm:disj-lb}
  The randomized communication complexity of the disjointness function is $R(\disj) = \Omega(N)$. This result holds even if $x$ and $y$ are promised to satisfy $\sum_{i = 1}^N x_i y_i \in \set{0, 1}$---that is, Alice's and Bob's inputs are either disjoint or intersect on a single point.
\end{thm}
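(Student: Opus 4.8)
The plan is to prove the distributional version by Yao's minimax principle and then bound the distributional complexity via \emph{information complexity}, following the information-theoretic direct-sum framework (Razborov's argument as streamlined by Bar-Yossef--Jayram--Kumar--Sivakumar). First I would fix a hard ``collapsing'' input distribution $\mu$: sample a special coordinate $D \in [N]$ uniformly, and for every $i$ independently draw $(x_i, y_i)$ uniformly from $\{(0,0),(0,1),(1,0)\}$; this yields only disjoint instances, and the promise $1$-instances are obtained by planting $(1,1)$ in coordinate $D$. The crucial structural property is that, conditioned on $D$, $\mu$ is a product distribution over coordinates, and on coordinate $D$ it is supported on $\{(0,1),(1,0)\}$, where $\disj$ agrees with the complement of a single two-bit $\mathrm{AND}$ gate.

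Next I would invoke the direct-sum theorem for conditional information complexity: for any randomized protocol $\Pi$ that computes $\disj$ with error $1/3$ on the planted distribution, the conditional information cost satisfies $I(X,Y;\Pi \mid D) \ge \sum_{i=1}^N I(X_i,Y_i;\Pi \mid D)$, and each term is bounded below by the conditional information cost, under the single-coordinate distribution $\nu$ (uniform on $\{(0,1),(1,0)\}$), of a protocol that solves the two-bit $\mathrm{AND}$ problem. The reduction embeds an $\mathrm{AND}$ instance into coordinate $i$ and fills the remaining coordinates using private randomness drawn consistently with $\mu \mid D$; this is possible precisely because $\mu$ is a mixture of product distributions.

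The heart of the argument --- and the step I expect to be the main obstacle --- is the single-coordinate bound $\mathrm{IC}_\nu(\mathrm{AND}) = \Omega(1)$, which I would establish by the Hellinger-distance method. Writing $\Pi_{ab}$ for the distribution of the transcript on input $(a,b)$, the ``cut-and-paste'' (rectangle) property of communication protocols gives $\Pi_{00}(\tau)\,\Pi_{11}(\tau) = \Pi_{01}(\tau)\,\Pi_{10}(\tau)$ for every transcript $\tau$, and hence the squared Hellinger distances satisfy $h^2(\Pi_{00},\Pi_{11}) = h^2(\Pi_{01},\Pi_{10})$. Correctness on the inputs $(0,0)$ and $(1,1)$ forces these two transcript distributions to be nearly supported on disjoint sets of leaves, so $h^2(\Pi_{00},\Pi_{11}) = \Omega(1)$, whence $h^2(\Pi_{01},\Pi_{10}) = \Omega(1)$ as well; a Pinsker-type inequality relating $I(X_i,Y_i;\Pi \mid D)$ to $h^2(\Pi_{01},\Pi_{10})$ (valid because under $\nu$ the coordinate is uniform on $\{(0,1),(1,0)\}$) then delivers the $\Omega(1)$ bound per coordinate. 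Summing over the $N$ coordinates yields total communication $\ge \mathrm{IC}_\mu(\disj) = \Omega(N)$, and since the whole argument only ever uses inputs satisfying the unique-intersection promise, this proves the strengthened form. An alternative route I would keep in reserve is Razborov's original corruption/combinatorial-rectangle bound, which avoids information theory entirely but requires a more delicate counting argument over the structured rectangle partition.
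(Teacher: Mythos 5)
The paper does not prove Theorem~\ref{thm:disj-lb}; it is invoked as a black-box result from the cited references of Kalyanasundaram--Schnitger and Razborov, so there is no in-paper argument to compare yours against. Your sketch is a recognizable account of the Bar-Yossef--Jayram--Kumar--Sivakumar information-complexity proof, which is one of the two standard routes to this bound (the other being Razborov's corruption/rectangle argument, which you correctly hold in reserve), so in spirit you have identified a valid proof.

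That said, there is a real technical slip in how you set up the conditioning. In the BJKS argument the auxiliary random variable $D$ is \emph{not} a uniformly random index in $[N]$ (that index plays a different role: it is where a $(1,1)$ might be planted to form the promise yes-instances). Rather, $D=(D_1,\ldots,D_N)$ is a per-coordinate selector with $D_i\in\{a,b\}$ chosen uniformly; if $D_i=a$ then $X_i=0$ and $Y_i$ is uniform in $\{0,1\}$, and symmetrically if $D_i=b$. With your $D\in[N]$ and $(X_i,Y_i)\sim\mathrm{Unif}\{(0,0),(0,1),(1,0)\}$ i.i.d., conditioning on $D$ changes nothing, and the joint law of $(X_i,Y_i)$ is not a product over Alice$\times$Bob (it places no mass on $(1,1)$, which a product with those marginals would). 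The Alice$\times$Bob product structure \emph{conditioned on $D$} is precisely what licenses both the direct-sum decomposition $I(X,Y;\Pi\mid D)\ge\sum_i I(X_i,Y_i;\Pi\mid D)$ and the per-coordinate Hellinger/information inequality, so the statement ``conditioned on $D$, $\mu$ is a product distribution'' is false for your $D$ and the argument as written does not go through. Relatedly, the per-coordinate bound comes out as $I(X_i,Y_i;\Pi\mid D_i)\ge \tfrac12\bigl(h^2(\Pi_{00},\Pi_{01})+h^2(\Pi_{00},\Pi_{10})\bigr)$; the passage to $h^2(\Pi_{01},\Pi_{10})$, and thence via cut-and-paste to $h^2(\Pi_{00},\Pi_{11})$, requires an intermediate triangle-inequality step for Hellinger distance rather than a single direct Pinsker-type inequality. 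Both issues are repairable by adopting the standard BJKS conditioning, and then the rest of your outline is sound.
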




\begin{proof}[Proof of Theorem~\ref{thm:general-lb}]
  We employ the method of reduction from communication of the disjointness function described 
  above.
  We first describe an embedding $\calE$ of the disjointness function $\disj$. Fix an arbitrary graph $H$  
  on $n'$ vertices with arboricity $\alpha < n'$ and $m'$ edges. Let $K$ be an arbitrary graph with arboricity at most $\alpha$, $2 m'$ edges, and $2 m' / \alpha$ vertices (for example, taking $K$ to be an $\alpha$-regular graph on $2 m' / \alpha$ vertices). We set $n = 2 n'$ and $m = m'$ or $3 m'$ depending on the input of the disjointness function.

  Let $N = n' \alpha / 2 m'$. Given $x, y \in \set{0, 1}^N$, we define $G(x, y) = (V, E)$ as follows. We partition $V = W_0 \cup W_1 \cup \cdots \cup W_N$ where $\abs{W_0} = n'$ and $\abs{W_i} = 2 m' \alpha / n'$ for every $i>0$. The edge set $E$ is determined as follows: $W_0$ is an isomorphic copy of $H$; for $i \geq 1$, $W_i$ is isomorphic to $K$ if $x_i = y_i = 1$, and is a set of isolated vertices otherwise. We define the (partial) function $g : \calG_n^\alpha \to \set{0, 1}$ defined by $g(G) = 1$ if and only if every $W_i$ is a set of isolated vertices. It is straightforward to verify that the pair $(\calE, g)$ where $\calE(x, y) = G(x, y)$ is an embedding of $\disj$ in the sense of Definition~\ref{dfn:embedding}. Further, each degree, neighbor, and pair query to $G(x, y)$ can be simulated by Alice and Bob using $O(1)$ communication. Therefore, by Theorem~\ref{thm:general-query-lb} and the communication lower bound for $\disj$ (Theorem~\ref{thm:disj-lb}), any algorithm that distinguishes $g(G) = 1$ and $g(G) = 0$ with probability bounded away from $1/2$ requires $\Omega(N) = \Omega(n \alpha / m)$ queries in expectation.

  Finally, we show that an algorithm $\cA$ that samples edges from a distribution that is $\e$-close to uniform in TVD can be used to compute $g$ on the image of $\calE$. The key observation is that when $\disj(x, y) = 0$, a $2/3$ fraction of $G(x, y)$'s edges lie in some $W_i$ for $i \geq 1$, while if $\disj(x, y) = 1$, all of $G$'s edges are in $W_0$. Since $\cA$ samples edges from a $\e$-close to uniform distribution for $\e \leq 1/6$, in the case $\disj(x, y) = 0$, $\cA$ must return an edge in some $W_i$ for $i \geq 1$ with probability at least $2/3 - 1/6 = 1/2$.

  Consider the following algorithm: run $\cA$ twice independently to sample edges $e_1$ and $e_2$ from $G = G(x, y)$ using $2 q$ queries in expecatation. If either $e_i$ is not in $W_0$, output $\disj(x, y) = 0$, otherwise output $\disj(x, y) = 1$. The algorithm is always correct on instances where $\disj(x, y) = 1$. On instances where $\disj(x, y) = 0$, it only fails in the case where $e_1$ and $e_2$ both lie in $W_0$. By the computation in the previous paragraph, this event occurs with probability at most $1/2^2 = 1/4$. Therefore, the expected runtime of $\cA$ satisfies $2 q = \Omega(n \alpha / m)$, as desired.
\end{proof}

\section{Lower Bound for Trees}
\label{sec:tree-lower-bound}

In this section we prove Theorem~\ref{thm:tree-lb}, which asserts a query lower bound of $\Omega\paren{\frac{\log n}{\loglog n}}$ for  any algorithm $\cA$ that samples edges in a tree from a pointwise almost uniform distribution.




In order to prove Theorem~\ref{thm:tree-lb} we consider a family of labeled trees $\mT$, where all the trees in the family have the same topology, and differ only in their vertex and edge labeling. The underlying tree $\tT$ is defined as follows: For $k=\log n$, each internal vertex in the tree is incident to $k$ edges, so that the root $r$ has $k$ children, and every other internal vertex has $k-1$ children. Hence, the depth of the tree is $D = \Theta\paren{\frac{\log n}{\loglog n}}$.
In each labeled tree belonging to $\mT$, the vertices are given pairwise distinct labels in $[n]$. Similarly, for each non-leaf vertex $u \in V$ the (ordered) edges $(u, v)$ incident with $u$ are given pairwise distinct labels in $[k]$. We take $\mT$ to be the set of all possible labelings constructed in this way. Since the trees differ only by their labeling, we will sometimes refer to $\mT$ as a family of {\em tree labelings\/}.

Again we note that
sampling edges from a pointwise almost uniform distribution is equivalent to sampling vertices with probability (approximately) proportional to their degree.
We prove Theorem~\ref{thm:tree-lb} by showing that any algorithm that returns the root $r$ of $\tT$ with probability $\Omega\paren{\frac{k}{n}} = \Omega\paren{\frac{\log n}{n}}$ requires $\Omega\paren{\frac{\log n}{\loglog n}}$ queries. More formally, we will prove the following proposition.

\begin{prop}
  \label{prop:root-lb}
There exists a constant $C>1$ such that for any algorithm $\cA$ using at most $\lb = \frac{\log n}{C \cdot \log \log n}$ queries, we have
  \[
  \Pr\left[\cA \text{ returns } r \text{ on input } \lT\; \right] \leq \frac{ \log n}{n \loglog n},
 \]
  where $\lT$ is chosen according to the uniform distribution in $\mT$.
\end{prop}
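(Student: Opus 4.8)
The plan is to set up an adversarial process $\calP$ that answers $\cA$'s queries while lazily revealing a uniformly random labeling of the fixed topology $\tT$, and then to argue that $\cA$ gains essentially no information about which of the many ``candidate paths'' actually leads to the root. First I would formalize $\calP$: it maintains a partial injection from labels to vertices of $\tT$; when $\cA$ issues a query with a label not yet seen, $\calP$ assigns it to a uniformly random still-unlabeled vertex; when $\cA$ issues a $\nbr(x,i)$ query whose answer is not yet forced, $\calP$ picks a uniformly random unassigned edge-slot at the corresponding vertex and, if the other endpoint is new, assigns it a fresh random label. One checks that the distribution of $\lT$ produced this way is uniform on $\mT$, so it suffices to bound $\Pr[\cA \text{ returns } r]$ under this process.

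Next I would introduce the terminology from the overview: call a vertex \emph{critical} if its depth is at most $L$, and \emph{shallow} if its depth is at most $2L$, where $L = \setL$. The core estimates are: (i) for a single query with a new label, the probability that $\calP$ lands on a critical vertex is at most $(\text{number of vertices of depth} \le L)/(\Omega(n)) = O((\log n)^L / n)$; over $L \le \lb$ queries this is still $\ll 1$, but not negligible; and (ii) the probability of hitting a shallow vertex in addition to a critical vertex during $\cA$'s run is $o(1/n)$ — this uses that there are only $\poly((\log n)^L) = n^{o(1)}$ vertices within distance $L$ of any critical vertex, multiplied by $L$ queries and the $O((\log n)^{2L}/n)$ probability of hitting any particular shallow vertex. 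I would account for the event in (ii) as an automatic ``success'' of $\cA$, so that the remaining analysis may assume that if a critical vertex $u$ at depth $\Delta < L$ is hit, then no other shallow vertex is ever hit.

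The heart of the argument is then a symmetry claim: conditioned on hitting exactly one critical vertex $u$ at depth $\Delta$ and no other shallow vertex, the $(\log n)^\Delta$ edge-labeled paths of length $\Delta$ emanating ``upward'' from $u$ through unrevealed edges are exchangeable, and the labels on them are uniformly distributed among the unused labels; hence for any fixed sequence of at most $\lb - (\text{queries used so far})$ further queries, the probability that $\cA$ identifies the path to $r$ (equivalently, ever queries $r$ or deduces its label) is at most $1/(\log n)^\Delta \cdot (\text{small factor})$, times the probability $(\log n)^\Delta/n \cdot (\text{polylog})$ of having hit $u$ at depth $\Delta$ in the first place. Summing the product over $\Delta \in \{0,1,\dots,L-1\}$ gives a geometric-type sum dominated by the largest term, yielding $O(L/n) = O(\log n/(n \loglog n))$; and if no critical vertex is ever hit, the root's label is uniform over $\Omega(n)$ unused labels, contributing only $O(1/n)$. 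Combining the three contributions proves the proposition with $C$ chosen large enough to absorb the $\poly\log$ factors.

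The main obstacle I anticipate is making the symmetry claim rigorous in the presence of \emph{adaptive} queries: $\cA$ may interleave queries from $u$'s subtree, from unrelated parts of the tree, and partial walks up several candidate paths, so I must argue carefully (by induction on the number of queries, tracking the partial labeling as a random variable conditioned on the transcript) that at every step the still-unrevealed upward neighbors of the ``frontier'' of $u$'s exploration remain uniformly random and exchangeable, and that each query can eliminate at most one candidate path or advance the frontier by at most one level. Controlling the coupling between ``hitting a critical vertex'' and ``hitting a nearby shallow vertex'' cleanly — so that the error term in (ii) is genuinely $o(1/n)$ rather than merely $o(1)$ — is the other delicate point, and is exactly why shallow vertices (depth $\le 2L$, not $\le L$) are singled out.
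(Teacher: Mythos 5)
Your plan reproduces the paper's overall architecture almost exactly: the lazy-revelation process $\calP$, the distinction between \emph{critical} (depth $\le L$) and \emph{shallow} (depth $\le 2L$) vertices, the three-way case split (no critical hit / exactly one critical hit and no other shallow hit / two shallow hits), the $O(1/n)$ bound for the first case, the $o(1/n)$ bound for the third case via a $n^{O(1/C)}$ count of shallow vertices, and the exchangeability-of-upward-paths argument for the middle case. So this is not a different route, and the question is whether the one step you flag as ``the main obstacle I anticipate'' is actually resolved.

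It is not, and this is exactly where the paper's proof introduces an idea you are missing: the \emph{auxiliary tree} $T'$ with deferred embedding. In your direct process, $\calP$ commits a critical label to a specific vertex $u$ of $T$ the moment the degree query lands critical; then, when $\cA$ starts walking up, you must argue by induction on the transcript that the unexplored upward edges remain exchangeable, while worrying that $\cA$ interleaves partial explorations of several candidate paths and may deduce information from how they (fail to) collide. Making this clean is genuinely delicate, and your sketch does not provide it. The paper sidesteps the entire induction: when the critical outcome occurs, $\calP$ does \emph{not} pick a vertex of $T$ at all. It instead labels the root $r'$ of an abstract auxiliary $k$-ary tree $T'$ of depth $L$ and answers every subsequent query landing in the critical region inside $T'$, which is perfectly homogeneous (every internal vertex has degree $k$). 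Only after $\cA$'s last query does $\calP$ choose a uniformly random embedding of $T'$ into $T$ sending $r'$ to a uniform critical vertex $u_*$. Because the embedding is drawn independently of $\cA$'s transcript and $T'$ is symmetric, a short counting argument (the paper's Claim on the embedding) shows each vertex of $T'$ is mapped to the root $r$ of $T$ with probability exactly $1/n_c$ ($n_c$ = number of critical vertices). This is precisely the exchangeability you want, obtained without any transcript-conditioned induction. The absence of a second shallow hit (your event (ii)) is what guarantees no collision constraint pollutes this independent embedding, which is the actual reason ``shallow = depth $\le 2L$'' is the right threshold.

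A secondary, more arithmetic, concern: your middle-case bookkeeping ``sum over $\Delta$'' does not obviously close. The per-depth product you write is $\approx (\log n)^{\Delta}/n \cdot (\log n)^{-\Delta} \cdot \mathrm{polylog}$, and summing $L$ such terms, with the polylog factor absorbing the number of queries $Q$, gives $O(QL/n)$, a $\Theta(\log n/\loglog n)$ factor above the target. The deferred-embedding view avoids this: $\Pr[\Etwo] = O(Qn_c/n)$ and $\Pr[\text{success}\mid\Etwo] = 1/n_c$ multiply to $O(Q/n)$ with no sum over depths at all, because the depth of the output vertex inside $T'$ is pinned by $\cA$'s transcript and only the single matching depth of $u_*$ contributes. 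If you insist on summing over $\Delta$, you must account for the fact that only one $\Delta$ can match the depth of $\cA$'s output vertex, so the sum collapses to a single nonzero term — but the auxiliary-tree formulation makes this bookkeeping automatic rather than something to argue.
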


\begin{rem}
One may wonder if we can increase the lower bound $L$ in Proposition~\ref{prop:root-lb} by setting
 the upper bound on the probability of returning $r$ in the proposition to $\frac{\log n}{C'\loglog n}$ for a constant $C'>1$.
Indeed we can obtain a lower bound of $\lb = \frac{\log n}{C \cdot W(\log n)}$
where $W(\cdot)$ is the Labmert $W$ function (which in particular satisfies $W(x) = \frac{\ln x}{W(x)}$ so that
it is asymptotically slightly smaller than $\log(\cdot)$). However we believe the difference is negligible and hence not worth the slightly less natural construction.
\end{rem}


Let $\cA$ be any algorithm that returns a vertex $v$ in its input graph. We describe a process $\calP$ that interacts with $\cA$ and answers its queries while constructing a uniformly random labeled tree $\lT$ in $\mT$ on the fly. We say that $\cA$ \emph{succeeds}, if $\cA$ outputs the label of the root of $\lT$. The algorithm $\cA$ is allowed $Q \leq \lb$ queries. For the sake of the analysis, if certain ``bad'' events occur, $\calP$ will reveal the entire labeling $\lT$ to $\cA$. Conditioned on a bad event, we simply bound the probability that $\cA$ succeeds by $1$.

\subsection{Details of how \boldmath{$\calP$} answers queries and constructs \boldmath{$\lT \in \mT$}}

\subsubsection{The partially labeled trees $\tT_t$ and $\aTt$}
The process $\calP$ starts with an initially unlabeled tree, denoted $\tT$ (where each internal vertex has degree $k = \log n$ and its depth is $D =\Theta\paren{\frac{\log n}{\loglog n}}$). We denote the set of vertices of $\tT$ by $\tV$ and the set of ordered edges by $\tE$. We also let $\depth(u)$ denote the depth of vertex $u$ in $\tT$.

Following each query of $\cA$, $\calP$ labels some of the vertices/edges in $\tT$.
After it answers the last query of $\cA$, $\calP$ determines all labels of yet unlabeled vertices and edges in $\tT$, thus obtaining the final labeled tree $\lT$.
The labeling decisions made by $P$ ensure that $\lT$ is uniformly distributed in $\mT$
for any algorithm $\cA$.

We denote the partially labeled tree that $\calP$ holds after the first $t$ queries $q_1,\dots,q_t$  of $\cA$, by $\tT_t$. Thus, $\tT_t = (\tV,\tE,\pi_t,\tau_t)$,
where $\pi_t : \tV \to [n]\cup \{\bot\}$ is the partial labeling function of the vertices and $\tau_t : \tE \to [k]\cup \{\bot\}$ is the partial labeling function of the edges.  The symbol $\bot$ stands for `unlabeled'.  The partial labelings $\pi_t$ and $\tau_t$ have the following properties.
\begin{itemize}
\item \emph{Distinctness:} For each pair of distinct vertices $u,v \in \tV$, if $\pi_t(u) \in [n]$ and $\pi_t(v)\in [n]$, then $\pi_t(u) \neq \pi_t(v)$, and an analogous property holds for labels of edges incident to a common vertex.
\item \emph{Consistency:} For each vertex $u \in \tV$, if $\pi_{t-1}(u) \in [n]$, then $\pi_t(u) = \pi_{t-1}(u)$, and an analogous property holds for edge labels.
\end{itemize}
We let $\tV_t = \{u\in \tV:\; \pi_t(u) \in [n]\}$ denote the set of labeled vertices after the first $t$ queries and $\Pi_t = \{\pi_t(u):\; u \in \tV_t\}$ denotes the set of labels of these vertices.  (We set $\tV_0 = \emptyset$ and $\Pi_0 = \emptyset$.)  In order to distinguish between vertices in $\tT_t$ and their labels, we shall use $u,v,w$ for the former, and $x,y,z$ for the latter.  Note that if $x\in \Pi_t$, then $\pi^{-1}_t(x)$ is well defined.


In addition to the partially labeled tree $\tT_t$, the process $\calP$ maintains an auxiliary tree $\aTt = (\aV,\aE,\pi'_t,\tau'_t)$ of depth $\lb=\setL$ in which each internal vertex has degree $k$.
The role of this tree will become clear shortly. For now we just say that in $\aTt$, $\calP$ maintains labeling information
concerning vertices and edges that are ``close to the root of $T\,$'' (without yet determining their exact identity).

Initially $\aTt$ is unlabeled, so that $\pi'_0(u') = \bot$ for every $u' \in \aV$ and $\tau'_0((u',v')) = \bot$ for every $(u',v')\in \aE$.  The labeling functions $\pi'_t$ and $\tau'_t$ maintain the distinctness and consistency properties as defined for $\pi_t$ and $\tau_t$. Furthermore, the labels of vertices in $\aTt$ are distinct from those in $\tT_t$.  We let $\Pi'_t$ denote the set of labels of vertices in $\aTt$ (i.e., $\Pi'_t = \{\pi'_t(u'):\;u' \in \aV\} \setminus \{\bot\}$).  $P$ maintains the invariant that in $\aTt$, the labeled vertices form a connected subtree (rooted at the root $r'$ of $\aTt$).  This is as opposed to $\tT_t$, in which the labeled vertices may correspond to several connected components.

We view $\calP$ as {\em committing\/} to the labels of vertices in $\tV_t$, in the sense that these will be the labels of the corresponding vertices in the final labeled tree $\lT \in \mT$. On the other hand, for labeled vertices in $\aTt$, there is only a ``partial commitment.'' That is, $\calP$ commits to the (partial) labeling of $\aT$ constructed during the interaction with $\cA$, but does not commit to a correspondence between vertices and edges in $\aT$ and those in $\lT$ until after $\cA$ is finished with its queries. Only at the end of the interaction does $\calP$ choose a random embedding of $\aT$ into $\lT$. The embedding induces a partial labeling on $\lT$ as follows: First, the root $r'$ of $T'$ is
mapped to  a vertex $u_*\in V$ at depth at most $\lb$, and $u_*$ is labeled by $\pi'_t(r')$. Then, the labels of the children of $r'$ are assigned to neighbors of $u_*$, and so on, until all the vertices in $T'_t$ are mapped to the vertices of $T$.


\subsubsection{Answering queries}

We refer to a query $q_t$ involving a label $x_t \notin \Pi_{t-1}\cup \Pi'_{t-1}$ as a \dft{new label} query. In order to simplify the presentation, we assume that for any query involving a new label, $\cA$ first performs a degree query on $x_t$, and following this query we have $x_t \in \Pi_t \cup \Pi'_t$.  A lower bound on the number of queries performed by $\cA$ under this assumption translates to the same lower bound up to a factor of three in the standard model.

We also assume that if a neighbor query $(x_t,i_t)$ was answered by $y_t$, where $y_t$ is a label yet unobserved by $\cA$, then $\calP$ provides $\cA$ with the degree of the vertex labeled by $y_t$ as well as the label of the edge from this vertex to the vertex labeled by $x_t$. Similarly, if a pair query $\pair(x_t,y_t)$ is answered positively, then $\calP$ returns the labels of the edges between the vertices labeled by $x_t$ and $y_t$, respectively.  Clearly, any lower bound that holds under these ``augmented answers'' holds under the standard query model.  It follows that for each neighbor query $q_t = \nbr(x_t,i_t)$ we have that $x_t \in \Pi_{t-1}\cup \Pi'_{t-1}$ and similarly for each pair query $q_t = \pair(x_t,y_t)$, we have that $x_t,y_t \in \Pi_{t-1}\cup \Pi'_{t-1}$. On the other hand, for each degree query $q_t = \deg(x_t)$, we may assume that $x_t \notin \Pi_{t-1}\cup \Pi'_{t-1}$ (since $\cA$ is provided with the degrees of all vertices with labels in $\Pi_{t-1} \cup \Pi'_{t-1}$).

Let $\mT_t$ denote the set of all labeled trees in $\mT$ that are consistent with $T_t$ and $T'_t$. That is, $\mT_t$ is the family of all labeled trees in $\mT$ that are consistent with $T$ and with an embedding of $T'$ into $T$ such that the root of $T'$ is mapped to a vertex of $T'$ whose depth is at most $L$.
In order to simplify the analysis we shall elaborately describe how $\calP$  answers each query. However, as will be evident from our description, it holds that $\calP$'s answer to the $t\th$ query $q_t$ can be viewed as choosing a random labeled tree $\lT_{t} \in \mT_{t-1}$, and answering  $q_{t}$ according to $\lT_{t}$. (Observe that $\lT_t$ is only used to answer $q_{t}$ and is later discarded.) Therefore, at the end of the interaction $\calP$ generates a uniform labeled tree $\lT\in \mT$.\footnote{The described process is equivalent to the following.
Consider all possible answers to the query  $q_t$ and weigh each answer $a_t$ according to the fraction of trees in $\mT_{t-1}$ for which $\answer(q_t)=a_t$. That is, for every possible answer $a_t$, $\Pr[a_t]=\frac{|\mT_{t-1}(a_t)|}{|\mT_{t-1}|}$, where $\mT_{t-1}(a_t)$ is the  subset of labeled trees in $\mT_{t-1}$ in which $\answer(q_t)=a_t$.}
\smallskip

\noindent $\calP$ answers the $t\th$ query $q_t$ of $\cA$ as follows:

\ifnum\icalp=0
\paragraph{Degree queries.}
\else
\paragraph{Degree queries}
\fi 
Recall that by assumption, for each degree query $q_t = \deg(x_t)$ we have $x_t \notin \Pi_t \cup \Pi'_t$. When a degree query is made, $\calP$ first decides whether the depth of the vertex to be labeled by $x_t$ is larger than $2\lb$, at most $\lb$, or in between. (Recall that $L=\setL$.) For each vertex $u \in V$, we say that $u$ is \emph{deep} if $\Delta(u) > 2 \lb$, \emph{shallow} if $\Delta(u) \leq 2 \lb$, and \emph{critical} if $\Delta(u) \leq \lb$. We denote the number of deep, shallow, and critical vertices in $\tT$ by $n_d, n_s$, and $n_c$, respectively. For any $t \leq Q$, we denote the number of deep, shallow, and critical vertices that have been labeled by $\calP$ before $\cA$'s $t\th$ query by $\ell_d(t)$, $\ell_s(t)$, and $\ell_c(t)$, respectively. The count $\ell_s(t)$ also includes the number of labels assigned to the auxiliary tree $\aT$, as these labels will eventually be assigned to shallow vertices.

If the $t\th$ query performed by $\cA$ is a degree query, $\calP$ first flips a coin with \emph{bias} $p(t)$ (i.e., the probability that the result is {\sf heads}) defined by
\[
p(t) \eqdef \frac{n_s - \ell_s(t)}{n - \ell_d(t) - \ell_s(t)}\;.
\]
That is, $p(t)$ is the fraction of unlabeled vertices in $\tT_{t-1}$ that are shallow. If the  coin flip turns out {\sf tails} (with probability $1-p(t)$), then we say that
the outcome of the query \emph{is deep}. In this case
$\calP$ selects an unlabeled deep vertex $u$ uniformly at random, sets $\pi_t(u) = x_t$, and returns the degree of $u$.

If the coin flip turns out {\sf heads}, then we say that the outcome of the query is \emph{shallow}. In this case
$\calP$ first checks if
the outcome of any previous degree query was shallow.
If a previous degree query was shallow, $\calP$
completes the labeling of $T$ as follows. It first selects a random embedding of $T'_t$ into $T_t$, then it selects uniformly at random a shallow unlabeled vertex to be the preimage of $x$ and finally it completes the labeling of $T_t$ uniformly at random to a labeled tree $\lT$. After the completion of the labeling, it gives $\cA$ the label of the root of $\lT$, thereby allowing $\cA$ to succeed. In this case we say that $\cA$ \emph{succeeds trivially}.

Otherwise (the outcome of this query is shallow, and there was no previous shallow outcome), $\calP$ flips another coin, this time with bias
\[
p'(t) \eqdef \frac{n_c - \ell_c(t)}{n_s - \ell_s(t)} = \frac{n_c}{n_s - \ell_s(t)}\;,
\]
That is, $p'(t)$ is the fraction of unlabeled shallow vertices that are also critical, where the equality
is justified as follows.
Given that there was no previous degree query whose outcome was shallow, it holds that for every $v\in \tV_t$, $\Delta(u)> 2L-t$, implying that no critical vertex could have been reached through neighbor queries. Therefore, $\ell_c(t)=0$.
If this coin flip results in {\sf tails} (with probability $1 - p'(t)$), then we say that the outcome of the query is \emph{not critical}.
In this case $\calP$ picks a uniformly random unlabeled vertex $u$ that is shallow, but not critical (i.e., with $\lb < \Delta(u) \leq 2 \lb$), assigns $\pi_t(u) = x_t$, and returns the degree of $u$ (which is necessarily $k$). Finally, in the case where the second coin flip is {\sf heads}, i.e., the outcome \emph{is critical}, $\calP$
does the following. It labels the root $r'$ of $\aT$ with $x_t$, i.e., sets $\pi'_t(r')=x_t$, so that now $\Pi'_t = \{x_t\}$. In this subcase $\calP$ returns that the answer to this degree query is $k$.

Observe that the above process is equivalent to choosing a labeled tree $\lT_{t} \in \mT_{t-1}$ uniformly at random, and answering according to $\pi^{-1}(x_{t})$ in $\lT_{t}$. Then $\mT_{t} \subseteq \mT_{t-1}$ is taken to be the sub-family of labelings consistent with this query answer.

\ifnum\icalp=0
\paragraph{Neighbor queries.} 
\else
\paragraph{Neighbor queries}
\fi
First consider the case that $q_t$ is a neighbor query $\nbr(x_t,i_t)$, and recall that by our assumption on $\cA$, $x_t \in \Pi_{t-1}\cup \Pi'_{t-1}$.  If $x_t \in \Pi_t$, then let $u = \pi_{t-1}^{-1}(x_t)$.  We may assume without loss of generality that there is no edge $(u,v)$ incident to $u$ such that $\tau_{t-1}((u,v)) = i$ (or else $\cA$ does not need to perform this query, as its answer is implied by answers to previous queries).  We may also assume that if $u$ is a leaf, so that it has degree $1$, then $i_t = 1$.

Now $\calP$ selects uniformly at random an edge $(u,v) \in \tE$ such that $\tau_{t-1}((u,v)) = \bot$.  If $\pi_{t-1}(v) \in [n]$, then $\calP$ sets $\tau_t((u,v)) = i_t$, and lets $\tau_t((v,u))$ be a uniformly selected $j$ in $[k]\setminus \{\tau_{t-1}((v,w))\}$. Otherwise ($\pi_{t-1}(v) =\bot$), it sets $\pi_t(v)$ to a uniformly selected label in $[n]\setminus (\Pi_{t-1} \cup \Pi'_{t-1})$ and then proceeds as in the case that $\pi_{t-1}(v) \in [n]$.  Finally, $\calP$ returns $\pi_t(v)$ as well as the degrees of $u$ and $v$ and $\tau_t((v,u))$.

If $x_t \in \Pi'_t$, then for $u' = (\pi'_{t-1})^{-1}(x_t)$, $\calP$ selects uniformly at random an edge $(u',v') \in \aE$ such that $\tau_{t-1}((u',v')) = \bot$.  Here it will always be the case that $\pi'_{t-1}(v') = \bot$ (since the labeling of $\aT$ is always done from the root down the tree), so that $\calP$ sets $\pi'_t(v')$ to a uniformly selected label in $[n]\setminus (\Pi_{t-1} \cup \Pi'_{t-1})$. It then sets $\tau'_t((u',v')) = i_t$, and lets $\tau'_t((v',u'))$ be a uniformly selected $j$ in $[k]$.

\ifnum\icalp=0
\paragraph{Pair queries.} 
\else
\paragraph{Pair queries}
\fi
Next consider the case that $q_t$ is a pair query $\pair(x_t,y_t)$. Recall that $x_t,y_t \in \Pi_{t-1}\cup \Pi'_{t-1}$ by assumption. We may assume without loss of generality that neither $y_t$ was an answer to a previous neighbor query $(x_{t'},i_{t'})$ such that $x_{t'} = x_t$, nor that $x_t$ was an answer to a previous neighbor query $(x_{t'},i_{t'})$ such that $x_{t'} = y_t$ (or else $\cA$ does not need to perform this query, as its answer is implied by answers to previous queries).

 If $x_t,y_t \in \Pi_{t-1}$, then let $u = \pi_{t-1}^{-1}(x_t)$ and $v = \pi_{t-1}^{-1}(y_t)$. If $u$ and $v$ are neighbors in $\tT$, then $\calP$ returns a positive answer. It also selects random labels for $(u,v)$ and $(v,u)$ (among those labels not yet used for edges incident to $u$ and $v$, respectively), updates $\tau_t$ accordingly, and returns these labels to $\cA$. If $u$ and $v$ are not neighbors in $\tT$, then $\calP$ returns a negative answer.

If $x_t \in \Pi_{t-1}$ and $y_t \in \Pi'_{t-1}$ (or vice versa), then $\calP$ returns a negative answer, since for any embedding of $T'$ into $T$ it cannot be the case that $(\pi'_{t-1})^{-1}(y_t)$ will be mapped to a neighbor of $ \pi^{-1}_{t-1}(x)$. This is true since for every $x \in \Pi_{t-1}$, $\Delta(\pi^{-1}(x)) > 2L-t_1$ and  every vertex $u'=(\pi^{'})^{-1}(y)$ for $y \in \Pi'_{t-1}$ will be mapped to a vertex in $T$ whose depth is at most $L+t_2$, for $t_1, t_2$ such that $t_1+t_2=t$. Hence, for any embedding of $T'$ into $T$,  the preimages of  $x_t$ and $y_t$ cannot be neighbors.
Finally,  we may assume that $\cA$ does not perform a pair query $(x_t,y_t)$ where $x_t,y_t \in \Pi'_{t-1}$, since the answer to this query is implied by previous queries.

\subsubsection{The final (fully-labeled) tree  $\lT \in \mT$}\label{subsub:lb-complete-label}

Following the last query of $\cA$, the process $\calP$ 
selects a labeled tree $\lT\in \mT_Q$ uniformly at random.
Observe that if $\Pi'_Q=\emptyset$, then each labeled tree in $\mT_Q$ simply corresponds to a possible completion of the labeling of $T_Q$ to a complete labeling of $T$. If $\Pi'_Q\neq \emptyset$, then the trees in $\mT_Q$ can be viewed as trees obtained by  the following two step process. First choose a uniform embedding $T'$ into  $T$ from all possible embeddings that map the root of $T'$ to a critical vertex in $T$ 
Second, complete the labeling to a complete labeling of $T$.

\subsection{Completing the analysis}
\label{subsec:lb-analysis}

In order to complete the analysis, we must argue
that any $\cA$ succeeds when interacting with $\calP$ is small.


To bound the probability that $\cA$ succeeds, we consider three events:
\begin{itemize}
\item $\Eone$: There was no (degree) query whose outcome was critical.
\item $\Etwo$: There was a degree query whose outcome was critical, and no other degree query whose outcome was shallow.
\item $\Ethree$: There was more than one degree query whose outcome was shallow.
\end{itemize}
The three events above are exhaustive. The following three claims bound the probability that $\cA$ succeeds in each case.

\begin{claim}\label{clm:Eone}
  Let $\cA$ be an algorithm that performs $Q \leq \lb$ queries interacting with $\calP$. Then $\Pr[\cA \text{ succeeds } |\, \Eone] = O(1 / n)$, where the probability is taken over both the randomness of $\calP$ and the randomness of $\cA$.
\end{claim}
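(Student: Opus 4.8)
The plan is to observe that under $\Eone$, the root $r$ of $\tT$ never receives a label during the interaction with $\calP$, so $\cA$ can only succeed by ``guessing'' the label eventually assigned to $r$ at the completion stage, and this label is (almost) uniform over a linear-sized set. First I would argue that $\Eone$ implies that no critical vertex is ever labeled: the only way a vertex of depth $\leq \lb$ (in particular $r$, whose depth is $0$) can be assigned a label is through a degree query whose outcome is critical (neighbor and pair queries only touch vertices within distance $t \leq Q \leq \lb$ of an already-labeled vertex, and every labeled vertex of $\tV_t$ has depth $> 2\lb - t > \lb$, hence can never reach a critical vertex via $\leq \lb$ further neighbor steps — this is exactly the depth bookkeeping already used in the description of how $\calP$ answers pair queries). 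Thus, conditioned on $\Eone$, after all $Q$ queries we have $r \notin \tV_Q$ and $\Pi'_Q = \emptyset$.

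Next I would use the completion rule of Section~\ref{subsub:lb-complete-label}: since $\Pi'_Q = \emptyset$, $\calP$ completes the labeling of $\tT_Q$ to a uniformly random $\lT \in \mT_Q$, which just means choosing a uniformly random bijection between the unlabeled vertices $\tV \setminus \tV_Q$ and the unused labels $[n] \setminus \Pi_Q$. Since $r \in \tV \setminus \tV_Q$ and $|\tV \setminus \tV_Q| = |[n]\setminus\Pi_Q| \geq n - Q \geq n - \lb$, the label assigned to $r$ is uniformly distributed over a set of at least $n - \lb$ candidate labels, and this is independent of everything $\cA$ has observed. Hence, whatever label $\cA$ outputs, the probability it coincides with $\pi(r)$ is at most $\frac{1}{n - \lb} = O(1/n)$ (using $\lb = o(n)$). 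I would also note the subtlety that $\cA$'s output could in principle depend on $\pi(r)$ only through the transcript, but the transcript is, by construction, a deterministic function of the random choices $\calP$ made in answering $q_1,\dots,q_Q$, none of which reveal or constrain $\pi(r)$ beyond ``$\pi(r) \notin \Pi_Q$''; formally one conditions on the transcript and on $\Eone$ and observes $\pi(r)$ remains uniform on $[n]\setminus\Pi_Q$.

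The main obstacle I anticipate is the bookkeeping in the first step — rigorously establishing that, conditioned on $\Eone$, no vertex of depth $\leq \lb$ is ever labeled. One has to track that every labeled vertex starts ``deep'' (depth $> 2\lb$, since shallow outcomes of degree queries are subsumed by $\Eone$ ruling out critical ones, and the ``not critical'' shallow vertices have depth $> \lb$) and that a chain of at most $t \leq Q \leq \lb$ neighbor/pair steps starting from a vertex of depth $> 2\lb - t'$ (for the appropriate earlier time $t'$) cannot reach depth $\leq \lb$, using the triangle-inequality-type bound on depths in a tree. This is conceptually straightforward but needs to be stated carefully because it is the crux of why $r$ stays unlabeled; once that is in hand, the probability bound $O(1/n)$ is immediate.
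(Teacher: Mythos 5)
Your overall approach is exactly the paper's: condition on $\Eone$, observe that the root $r$ was not labeled during the interaction so that $\pi(r)$ is chosen uniformly from $[n]\setminus\Pi_Q$ at the completion stage (using $\Pi'_Q=\emptyset$), and conclude $\Pr[\cA\text{ succeeds}\mid\Eone]\le 1/(n-Q)=O(1/n)$. However, the intermediate claim you introduce --- that under $\Eone$ ``no critical vertex is ever labeled,'' supported by the invariant that every $v\in\tV_t$ has $\Delta(v)>2\lb-t$ --- is false. The event $\Eone$ only rules out \emph{critical} degree-query outcomes; a degree query whose outcome is \emph{shallow but not critical} is perfectly compatible with $\Eone$, and in that case $\calP$ labels a vertex $u$ with $\lb<\Delta(u)\le 2\lb$, so $\Delta(u)$ can be as small as $\lb+1$. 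This already violates your invariant (take $t=1$: $\lb+1\not>2\lb-1$), and one further neighbor query to $u$'s parent can label a vertex at depth $\lb$, i.e.\ a \emph{critical} vertex. The bound $\Delta(v)>2\lb-t$ that you cite from the pair-query discussion is stated there explicitly under the assumption that no earlier shallow outcome has occurred, which is precisely the hypothesis your application lacks.

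The fact you actually need is weaker and does hold: $r$ itself (depth $0$) is never labeled under $\Eone$. Under $\Eone$ every degree-query outcome is either deep (depth $>2\lb$) or shallow-not-critical (depth $\ge\lb+1$); with at most $Q-1\le\lb-1$ subsequent neighbor queries, the minimum depth of any labeled vertex is at least $\lb+1-(\lb-1)=2>0$, and pair queries introduce no new vertex labels. Substituting this corrected statement, the remainder of your argument --- including the useful remark that $\pi(r)$ remains uniform on $[n]\setminus\Pi_Q$ conditionally on the transcript, which the paper leaves implicit --- is sound and matches the paper's (considerably terser) proof.
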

\begin{proof}
  We consider two subcases. First, suppose $\cA$ returns the label $x$ of a vertex $v$ that was labeled in $\tT_Q$ (i.e., before $\calP$ completed the partial labeling formed during the interaction, so that $x\in \Pi_Q$). Since
  none of $\cA$'s queries had a critical outcome,
  the root $r$ was unlabeled in $\tT_Q$. Therefore, in this case, $\cA$ succeeds with probability $0$.

  Now consider the case that $\cA$ returns a label $x \notin \Pi_Q$. Since $\calP$ chooses the label of $r$, $\pi(r)$, uniformly from $[n] \setminus \Pi_Q$, the probability that $\cA$ succeeds in this case is $1 / (n - \abs{\Pi_Q}) \leq 1 / (n - Q) = O(1/n)$, which gives the desired result. 
\end{proof}


\begin{claim}\label{clm:Etwo}
  Let $\cA$ be any algorithm that performs $Q\leq \lb$ queries interacting with $\calP$. Then
  \[
  \Pr[\cA \text{ succeeds} \wedge \Etwo] =  \Pr[\text{$\cA$ succeeds} \mid \Etwo]\cdot \Pr[\Etwo] =  O\paren{\frac{Q}{n}},
  \]
  where the probability is taken over both the randomness of $\calP$ and the randomness of $\cA$.
\end{claim}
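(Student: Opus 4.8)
The plan is to bound the two factors $\Pr[\Etwo]$ and $\Pr[\cA\text{ succeeds}\mid\Etwo]$ separately — the displayed equality itself being nothing but the definition of conditional probability — and to show $\Pr[\Etwo] = O(Q\,n_c/n)$ and $\Pr[\cA\text{ succeeds}\mid\Etwo] = O(1/n_c)$, where $n_c$ denotes the number of critical vertices of $\tT$. Multiplying these gives $O(Q/n)$.

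\emph{Bounding $\Pr[\Etwo]$.} The event $\Etwo$ is contained in the event that \emph{some} degree query has a critical outcome. By the definition of $\calP$, the $t$-th degree query has a critical outcome with probability $p(t)\,p'(t) = \frac{n_c}{n-\ell_d(t)-\ell_s(t)} \le \frac{n_c}{n-Q}$, since at most $Q$ vertices are labelled before that query. Taking a union bound over the at most $Q$ degree queries and using $Q \le \lb = o(n)$ gives $\Pr[\Etwo] \le \frac{Q n_c}{n-Q} = O(Q n_c/n)$.

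\emph{Bounding $\Pr[\cA\text{ succeeds}\mid\Etwo]$.} I would condition on the transcript of the interaction, which fixes $\cA$'s output label $x$, the final partial labellings $T_Q$ and $\aTQ$, and whether $\Etwo$ holds, and distinguish three cases for $x$. If $x\notin\Pi_Q\cup\Pi'_Q$, then $\calP$ eventually assigns $r$ a label drawn uniformly from a pool of at least $n-2Q$ unused labels, so $\cA$ succeeds with probability $O(1/n)=O(1/n_c)$. If $x\in\Pi_Q$, then $x$ is already committed to a vertex of $T_Q$; under $\Etwo$ every such vertex has depth $>\lb$ — the single shallow-outcome degree query labelled the root $r'$ of $\aT$ rather than a vertex of $\tT$, and a chain of neighbour queries starting from a deep vertex (depth $>2\lb$) can decrease depth by at most $Q\le\lb$, so it stays at depth $>\lb$ — hence $x\ne\pi(r)$ and $\cA$ fails with certainty. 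If $x=\pi'_Q(v')$ for a labelled vertex $v'$ of $\aT$, then $\cA$ succeeds iff the random embedding $\varphi$ of $\aT$ into $\tT$ that $\calP$ chooses at the end satisfies $\varphi(v')=r$; I claim $\Pr[\varphi(v')=r]=O(1/n_c)$ for every such $v'$ and every transcript consistent with $\Etwo$. Granting this claim, summing the last case over transcripts contributes $O(1/n_c)$, so $\Pr[\cA\text{ succeeds}\mid\Etwo]=O(1/n_c)$, which combined with the bound on $\Pr[\Etwo]$ yields the claim.

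\emph{The main obstacle: the bound $\Pr[\varphi(v')=r]=O(1/n_c)$.} The key is a rigidity property of $\varphi$: since $\aT$ is a complete tree of depth $\lb$ in which $r'$ has $k=\log n$ children and every other internal vertex has $k-1$, while $\tT$ has internal degree $k$ and depth $D>2\lb$ (which holds for the constant $C$ of Proposition~\ref{prop:root-lb}), at every vertex the number of free neighbours exactly matches the number of children to be placed. Hence the \emph{image} $\varphi(\aV)$ is forced to equal the ball $B_\lb(u_*)=\{v\in\tV:\dist(v,u_*)\le\lb\}$ with $u_*=\varphi(r')$, and $\varphi$ only chooses how children are ordered. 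Consequently, validity of $\varphi$ (its image avoiding the at most $Q$ already-labelled vertices of $T_Q$) depends only on $u_*$, and $\calP$ picks $u_*$ uniformly from the set $S$ of critical vertices $u$ for which $B_\lb(u)$ is disjoint from the labelled part of $T_Q$. Each labelled vertex of $T_Q$ has depth $>\lb$ and hence lies in $B_\lb(u)$ for only $O(k^{\lb-1})=o(n_c)$ critical vertices $u$, so $|S|=(1-o(1))n_c$. Finally, conditioned on $\varphi(r')=u_*$, $\varphi$ is a uniformly random root-preserving isomorphism $\aT\to B_\lb(u_*)$, so for $v'$ at depth $d'$ in $\aT$ the image $\varphi(v')$ is uniform over the $k(k-1)^{d'-1}$ vertices of $B_\lb(u_*)$ at distance $d'$ from $u_*$; as $r$ is one of these precisely when $\depth(u_*)=d'$,
\[
\Pr[\varphi(v')=r] \;=\; \frac{|\{u\in S:\depth(u)=d'\}|}{|S|}\cdot\frac{1}{k(k-1)^{d'-1}} \;\le\; \frac{1}{|S|} \;=\; O\!\left(\frac1{n_c}\right),
\]
and the degenerate case $d'=0$ gives $\Pr[\varphi(r')=r]=\Pr[u_*=r]=1/|S|$ as well. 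I expect the delicate points to be establishing the rigidity property and the estimate $|S|=(1-o(1))n_c$ rigorously — in particular tracking the depths of labelled vertices and verifying $D>2\lb$ — while the remaining steps are routine.
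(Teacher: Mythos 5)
Your proof is correct and takes essentially the same route as the paper's: bound $\Pr[\Etwo]$ by a union bound over degree queries (identical to Eq.~\eqref{eqn:pr-Etwo}), and then show the conditional success probability is $O(1/n_c)$ by splitting on whether the output label lies in $\Pi'_Q$ and, in the interesting case, analysing the random embedding of $\aT$ into $\tT$. Your rigidity observation (the image of $\varphi$ is forced to be $B_\lb(u_*)$, so $\varphi(v')$ for $v'$ at depth $d'$ is uniform over the $d'$-sphere around $u_*$) is a clean repackaging of the paper's Claim~\ref{clm:embed}, which instead walks the unique $r'\!\to\! u'$ path and multiplies conditional factors; both computations yield the same $\frac{k(k-1)^{d'-1}}{n_c}\cdot\frac{1}{k(k-1)^{d'-1}}$. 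The one place you diverge is the introduction of the set $S$ of ``collision-free'' critical vertices and the estimate $|S|=(1-o(1))n_c$. This is sound but unnecessary: under $\Etwo$ every labelled vertex of $T_Q$ has depth $>2\lb - b_1$ while every labelled vertex of $\aTQ$ can only be mapped to depth $\le \lb + b_2$, where $b_1,b_2$ count neighbour queries on the two sides and $b_1+b_2 \le Q-2 < \lb$; hence the images cannot collide at all (this is the same depth bookkeeping the paper uses to justify that cross pair queries are always answered negatively), so $S$ is in fact the full set of critical vertices and the probability is exactly $1/n_c$ as the paper states. Your more defensive $O(1/n_c)$ bound via $|S|$ still suffices, so this is a harmless extra rather than a gap.
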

\begin{proof}
We first observe that
\begin{equation}
  \label{eqn:pr-Etwo}
\Pr[\Etwo] \leq \frac{Q\cdot n_c}{n-Q} = O\!\left(\frac{Q \cdot n_c }{ n}\right)\;.
\end{equation}
Indeed, the outcome of the $t\th$ 
query 
is critical with probability
\begin{eqnarray*}
(1-p(t)) \cdot p'(t)
   &= &\frac{n_s - \ell_s(t)}{n - \ell_d(t) - \ell_s(t)} \cdot \frac{n_c}{n_s - \ell_s(t)} 
   = \frac{n_c}{n - \abs{\Pi_t}}\;,
\end{eqnarray*}
so the expression in Equation~\eqref{eqn:pr-Etwo} follows by taking a union bound over $t = 1, 2, \ldots, Q$.

We now turn to bound that probability that $\mA$ succeeds conditioned on the event $\Etwo$.
First note that if $\cA$ outputs a label not in $\Pi'_Q$, then the probability that it succeeds (in particular conditioned on $\Etwo$) is $O(1/n)$, as in the proof of Claim~\ref{clm:Eone}.

Now assume that $\cA$ outputs a label $x \in \Pi'_Q$.
Recall that $\calP$ completes the labeling of $T$ by choosing a labeled tree in $\mT_Q$ (which is the family of labeled trees that result from embedding $T'_Q$ into $T_Q$ uniformly at random  and then completing the labeling of $T$ uniformly at random).
Recall that the embedding is done by mapping the root $r'$ of $T$ into a uniformly selected critical vertex $u_*$ in $T$.
In order to complete the proof of Claim~\ref{clm:Etwo}, we build on the next claim.

\begin{claim}\label{clm:embed}
Consider embedding $T'$ into $T$ by choosing a critical vertex $u_*$ in $T$ uniformly at random and mapping $r'$ to $u_*$. Then any vertex $u'$ in $T'$ is equally likely to be mapped to $r$.
\end{claim}
We defer the proof of the claim and now continue assuming its correctness.
 Given Claim~\ref{clm:embed}, it immediately follows that
\begin{align}
\Pr[\mA \text{ succeeds} \mid \Etwo] = \Pr[(\pi'_Q)^{-1}(x) \text{ will be mapped to } r \mid \Etwo]= \frac{1}{n_c}\;. \label{eqn:pr_suc_Etwo}
\end{align}
Claim~\ref{clm:Etwo} follows by combining Equation~\eqref{eqn:pr_suc_Etwo} with Equation~\eqref{eqn:pr-Etwo}.
\end{proof}

We now prove Claim~\ref{clm:embed}.
\begin{proof}[Proof of Claim~\ref{clm:embed}]
Fix any choice of $u' \in T'$. Let $\Delta' = \Delta(u')$ be the depth of $u'$ in $T'$, and let
$r'=u'_0,u'_1,\dots,u'_{\Delta'}=u'$ denote the sequence of vertices on the path from $r'$ to $u'$ in $T'$.
In order for $u'$ to be mapped to $r$ in the embedding of $T'$ into $T$ (when $r'$ is mapped to a randomly selected critical vertex $u_*$ in $T$), the following events must occur.
\begin{itemize}
\item First, the vertex $u_*$ in $T$ to which $r'$ is mapped must be at depth $\Delta'$ in $T$. This occurs with probability $\frac{k\cdot (k-1)^{\Delta'-1}}{n_c}$.
\item Second, letting
$u_*= v_0,v_1,\dots,v_{\Delta'}=r$ be the sequence of vertices on the path from $u_*$ to $r$ in $T$,
the following  must hold.
For each $j \in [\Delta']$, the child $u'_j$ of $u'_{j-1}$ must be mapped to the parent, $v_j$, of $v_{j-1}$
(recall that $u'_0=r'$ and $v_0 = u_*$). This occurs with probability $k^{-1}\cdot (k-1)^{-(\Delta'-1)}$ as explained next.
In  the random embedding process, each of the $k$ children of $r'=u'_0$ is equally likely to be mapped to the
parent of $u_*=v_0$, and for $j > 1$, conditioned on $u'_1,\dots,u'_{j-1}$ being mapped to $v_1,\dots,v_{j-1}$, respectively, $u'_j$ is mapped to $v_j$ with probability $\frac{1}{k-1}$.
\end{itemize}
The claim follows since the events are independent.
\end{proof}

\begin{figure}[!htb]
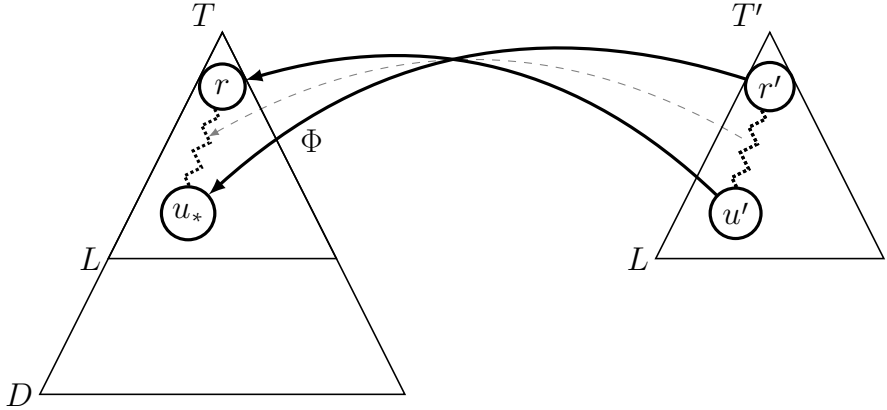

	\centering{
		\DrawTreeLB
	}
	\caption{An illustration of a possible embedding $\phi$ of $T'$ into $T$, where $r'$ is mapped by $\phi$ to $u_*$. Denoting the depth of $u_*$ by $\Delta$, we have  that one of the vertices at depth $\Delta$ in $T'$ will be mapped by $\phi$ to the root $r$ of $T$.}\label{fig:treelb}
\end{figure}

Finally we consider the event $\Ethree$ (conditioned on which $\cA$ trivially succeeds).
\begin{claim}
  \label{clm:Ethree}
 For $Q\leq L$,
\[Pr[\Ethree]=\frac{Q^2k^{4L}}{n^2}\;.\]
\end{claim}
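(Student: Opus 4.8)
The plan is to bound the probability that two distinct degree queries both have a shallow outcome. First I would recall that by the analysis accompanying the definition of $p(t)$ and $p'(t)$, the probability that the $t\th$ degree query has a \emph{shallow} outcome is exactly $p(t) = \frac{n_s - \ell_s(t)}{n - \ell_d(t) - \ell_s(t)} \leq \frac{n_s}{n - Q}$, since $\ell_s(t), \ell_d(t) \leq Q$ and $n_s \geq n_s - \ell_s(t)$. The key point is to estimate $n_s$, the number of shallow vertices, i.e.\ vertices at depth at most $2L$ in $\tT$. Since each internal vertex has degree at most $k = \log n$, the number of vertices at depth at most $2L$ is at most $\sum_{i=0}^{2L} k^i \leq 2 k^{2L}$, so $n_s \leq 2k^{2L}$ (one can absorb the factor $2$; the statement writes $k^{4L}$ below, which already leaves comfortable slack).

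Next I would take a union bound over \emph{pairs} of queries. The event $\Ethree$ requires that there exist two distinct indices $t_1 < t_2$ in $\{1, \dots, Q\}$ such that both the $t_1\th$ and $t_2\th$ degree queries had shallow outcomes. For a fixed pair $(t_1, t_2)$, the probability that the $t_1\th$ query is shallow is at most $\frac{n_s}{n-Q}$, and conditioned on that, the probability that the $t_2\th$ query is also shallow (note that after one shallow outcome, a second shallow outcome is precisely what makes $\cA$ succeed trivially) is again at most $\frac{n_s}{n-Q}$, since at most one shallow label has been consumed and $n_s - \ell_s(t_2) \leq n_s$ while $n - \ell_d(t_2) - \ell_s(t_2) \geq n - Q$. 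Hence
\[
\Pr[\Ethree] \;\leq\; \binom{Q}{2} \left(\frac{n_s}{n-Q}\right)^2 \;\leq\; \frac{Q^2}{2}\cdot \frac{(2k^{2L})^2}{(n-Q)^2} \;=\; \frac{2 Q^2 k^{4L}}{(n-Q)^2}\,.
\]
Since $Q \leq L = \frac{\log n}{C\loglog n} = o(n)$, we have $(n-Q)^2 = (1-o(1))n^2$, so the bound becomes $O\!\left(\frac{Q^2 k^{4L}}{n^2}\right)$, matching (up to the implicit constant) the claimed $\frac{Q^2 k^{4L}}{n^2}$.

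The only slightly delicate point — and the one I would be most careful about — is justifying that the conditional probability of a second shallow outcome is still bounded by roughly $\frac{n_s}{n-Q}$ rather than being affected by the conditioning: this follows because the quantities $\ell_d(t), \ell_s(t)$ are always at most the number of queries so far, and because $n_s$, $n_c$, $n_d$ are fixed structural parameters of $\tT$ independent of $\cA$'s choices. A secondary point is the clean estimate $n_s \leq 2k^{2L}$, which just uses $\sum_{i=0}^{2L} k^i \leq 2k^{2L}$ for $k \geq 2$; here the statement's generous exponent $k^{4L}$ (rather than $k^{2L}$) means no sharpness is needed and the geometric-series constant is irrelevant. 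Everything else is a routine union bound over the at most $\binom{Q}{2}$ pairs of query indices.
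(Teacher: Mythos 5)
Your proposal is correct and takes essentially the same route as the paper: bound $n_s = O(k^{2L})$ from the tree structure, observe that each degree query has shallow outcome with probability at most $n_s/(n-Q) = O(k^{2L}/n)$, and union-bound over the (at most $\binom{Q}{2}$) pairs of queries to obtain $O(Q^2 k^{4L}/n^2)$. The only cosmetic differences are that the paper counts $n_s$ exactly using the $k(k-1)^{\Delta-1}$ formula rather than the looser $\sum_i k^i$ bound, and that the paper does not spell out the conditioning justification you flag; neither difference affects the result, and your observation that the claim's equality sign should really be $O(\cdot)$ is accurate (the paper's own proof likewise only proves an upper bound).
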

\begin{proof}
  The number of vertices at depth $\Delta \geq 1$ is $k (k - 1)^{\Delta - 1}$, so that the number of shallow vertices is
  \begin{align*}
    n_s &= 1 + k + k (k - 1) + \cdots + k (k - 1)^{2 \lb - 1} = O(k^{2 \lb}).
  \end{align*}
  Thus, the probability that
  the outcome of any degree query
  is shallow is at most $n_s / (n - Q) = O(k^{2 \lb} / n)$.  Taking a union bound over all queries, the probability that at least two are shallow is $O(Q^2 k^{4 \lb} / n^2)$.
\end{proof}

We are now ready to prove Proposition~\ref{prop:root-lb} which implies Theorem~\ref{thm:tree-lb}.
\begin{proof}[Proof of Proposition~\ref{prop:root-lb}]
Since the events $\Eone, \Etwo$ and $\Ethree$ are 
exhaustive,
by Claims~\ref{clm:Eone}, \ref{clm:Etwo} and \ref{clm:Ethree}, it holds for $Q\leq L$ and $L=\setL$ that
\begin{align*}
\Pr[A \text{ succeeds}] &= \Pr[A \text{ succeeds} \mid \Eone]\cdot \Pr[\Eone] + \Pr[A \text{ succeeds} \mid \Etwo]\cdot \Pr[\Etwo]
\\ &+ \Pr[A \text{ succeeds} \mid \Ethree]\cdot \Pr[\Ethree]
\\&= O\paren{\frac{1}{n}}\cdot 1 + O\left(\frac{Q}{n}\right) + 1 \cdot O\paren{\frac{Q^2k^{4L}}{n^{2}}} =O\paren{\frac{Q}{n}}\;.
\end{align*}
Therefore, the proposition follows for $Q=O\paren{\frac{\log n}{\loglog n}}$.
\end{proof}

\fi

%

\bibliography{uniform-arboricity}

\ifnum\icalp=1
\appendix
\section{General Lower Bound}
\label{sec:general-lower-bound}

\section{Lower Bound for Trees}
\label{sec:tree-lower-bound}

\fi

\end{document}